\newsavebox{\mybox}
\DeclareMathSymbol{\mathdblquotechar}{\mathalpha}{letters}{`"}
\newcommand{\mathdblquote}{\mathtt{\mathdblquotechar}}
\algnewcommand{\LineComment}[1]{\State \(\triangleright\) #1}
\definecolor{addresscolor}{RGB}{22, 88, 12}
\definecolor{distcolor}{RGB}{40, 12, 88}
\definecolor{gencode}{RGB}{255,235,150}
\definecolor{jlcode}{RGB}{255,210,210}
\definecolor{dslcode}{RGB}{220,242,255}
\newtheorem{theorem}{Theorem}[section]
\newtheorem{lemma}[theorem]{Lemma}
\newcommand{\integers}{\mathbb{Z}}
\begin{document}

%

%

\twocolumn[

\aistatstitle{Automating Involutive MCMC using Probabilistic and Differentiable Programming}

\aistatsauthor{ Marco Cusumano-Towner \And Alexander K. Lew \And Vikash K. Mansinghka }

\aistatsaddress{ Massachusetts Institute of Technology } ]

\begin{abstract}
Involutive MCMC is a unifying mathematical construction for MCMC kernels that generalizes many classic and state-of-the-art MCMC algorithms, from reversible jump MCMC to kernels based on deep neural networks.
But as with MCMC samplers more generally, implementing involutive MCMC kernels is often tedious and error-prone, especially when sampling on complex state spaces.
This paper describes a technique for automating the implementation of involutive MCMC kernels given (i) a pair of probabilistic programs defining the target distribution and an auxiliary distribution respectively and (ii) a differentiable program that transforms the execution traces of these probabilistic programs.
The technique, which is implemented as part of the Gen probabilistic programming system, also automatically detects user errors in the specification of involutive MCMC kernels and exploits sparsity in the kernels for improved efficiency.
The paper shows example Gen code for a split-merge reversible jump move in an infinite Gaussian mixture model and a state-dependent mixture of proposals on a combinatorial space of covariance functions for a Gaussian process.
\end{abstract}

\section{INTRODUCTION}
Markov chain Monte Carlo (MCMC) algorithms are powerful tools for approximate sampling from probability distributions and are central to modern Bayesian statistics, probabilistic machine learning, statistical physics, and numerous application areas where probabilistic modeling and inference are used.
But designing and deriving efficient MCMC algorithms is mathematically involved, and implementing MCMC kernels is tedious and notoriously error-prone.
These challenges are especially pronounced when sampling from probability distributions on complex state spaces that combine symbolic, numeric, and structural uncertainty, such as those arising in computational biology~\citep{huelsenbeck2004bayesian}, robotics and scene understanding~\citep{geiger2011generative}, and models of human cognition~\citep{tenenbaum2011grow}.

Involutive MCMC is a mathematical construction for MCMC kernels that gives a simplifying and unifying perspective on a number of previously disparate classes of kernels,
including reversible jump MCMC~\citep{green1995reversible}, which is the dominant mathematical framework for MCMC on complex state spaces.
Involutive MCMC constructs an MCMC kernel from three components: (i) the unnormalized target density, (ii) a sampler and density for an auxiliary probability distribution, and (iii) an involution\footnote{An \emph{involution} is a bijection that is its own inverse (that is, $f$ where $f(f(z)) = z$).} on an extended state space.
While this construction is mathematically clarifying, correctly implementing involutive MCMC kernels on complex state spaces remains challenging due to tedious density and Jacobian computations and the need for careful reasoning about the state space.

This paper formulates involutive MCMC on general state spaces and shows how to automate the implementation of an involutive MCMC kernel from three declarative programs that define the target probability distribution, auxiliary probability distribution, and the involution, respectively.
The probability distributions are defined as probabilistic programs, and the involution is defined as a differentiable program that transforms the execution traces of the probabilistic programs.
We use probabilistic programming techniques and automatic differentiation to automatically compute the acceptance probability.
We also show how to automatically detect mathematical errors in the specification of an involutive MCMC kernel, and how to improve the efficiency by automatically exploiting the sparsity structure in the involution.
We implemented the approach within the Gen probabilistic programming system\footnote{\url{https://www.gen.dev}}~\citep{cusumano2019gen}.
The paper shows examples of involutive MCMC kernels implemented in Gen for (i) a split-merge reversible jump move in an infinite mixture model, and (ii) a state-dependent mixture of Metropolis-Hastings proposals on an infinite combinatorial space of covariance functions for a Gaussian process.
We also provide a lightweight PyTorch implementation of the basic approach at
{\small\url{https://github.com/probcomp/autoimcmc}}.

The contributions of this paper include:
\begin{enumerate}
\item A measure-theoretic formulation of involutive MCMC on general state spaces.
\item A mathematical formulation of state spaces consisting of arbitrary key-value stores (i.e. dictionaries) that formalizes the space of execution traces of Gen probabilistic programs, and a formulation of involutive MCMC on these spaces.
\item A differentiable programming language for defining transformations between spaces of traces of probabilistic programs.
\item An algorithm that automates the implementation of an involutive MCMC kernel given two probabilistic programs and a differentiable program encoding the involution, using automatic differentiation and tracing of probabilistic programs.
\item An extension to the algorithm that exploits sparsity in the involution to reduce the number of operations needed to apply an involutive MCMC kernel from $O(N^3)$ to $O(1)$ in some cases, where $N$ is the dimensionality of the latent space.
\item An algorithm that dynamically detects errors in the specification of an involutive MCMC kernel.
\end{enumerate}

\section{RELATED WORK}

The involutive MCMC construction was previously implemented~\citep{gen-inv-mcmc} by the authors as part of the inference library of the Gen probabilistic programming system~\citep{cusumano2019gen}.
The construction was motivated in part by a desire for a simple interface that automated the implementation of reversible jump MCMC samplers~\citep{green1995reversible}, state-dependent mixtures of proposals on complex state spaces, and data-driven neural proposals.
Gen's involutive MCMC construction has since been used by a number of researchers to design and implement MCMC algorithms in diverse domains, including computational biology~\citep{merrell2020inferring} and artificial intelligence~\citep{zhi2020online}.

\citet{imcmc} independently identified the involutive MCMC construction as a unifying framework for MCMC algorithms, and showed how more than a dozen classic and recent MCMC algorithms can be cast within this framework.
\citet{imcmc} also identified design principles for developing new MCMC algorithms using the involutive MCMC construction, and showed that the framework aids in the derivation of novel efficient MCMC algorithms.

The involutive MCMC construction encompasses many existing classes of MCMC kernels, some of which explicitly make use of bijective or involutive deterministic maps.
In particular, the reversible jump framework~\citep{green1995reversible,hastie2012model} employs a family of continuously differentiable bijections between the parameter spaces of different models.
\citet{tierney1998note} described a family of deterministic proposals based on a deterministic involution that is equivalent to involutive MCMC but without the auxiliary probability distribution.
More recently,~\citet{spanbauer2020deep} defined a class of deep generative models based on differentiable involutions and trained these models to serve as efficient proposal distributions on continuous state spaces; the resulting algorithm is an instance of the construction presented in this paper.

In recent decades, a number of probabilistic programming systems have automated the implementations of probabilistic inference algorithms~\citep{gilks1994language,milch2007,pfeffer200714,goodman2012church,gehr2016psi,carpenter2017stan}.
Most of these systems support generic built-in inference algorithms, with user-customization limited to tweaking algorithm parameters.
Some systems allow for user-defined variational families or proposal distributions~\citep{ritchie2016deep,bingham2019pyro}.
Programmable inference~\citep{mansinghka2018probabilistic} proposes that inference algorithms be programmed by users using new high-level inference abstractions.
The Gen probabilistic programming system~\citep{cusumano2019gen} exposes an API that supports high-level user implementations of an open-ended set of inference algorithms, and abstracts away low-level implementation details of inference algorithms.

One other probabilistic programming system besides Gen supports custom reversible jump samplers:
\citet{roberts2019} present a system embedded in Haskell that automatically generates the implementation of some reversible jump MCMC kernels from a high-level specification.
\citet{narayanan2020symbolic} give a technique that automatically computes Metropolis-Hastings acceptance probabilities in some settings; however, these approaches do not handle many kernels that can be handled by our technique, including the example in Figure~\ref{fig:structure-learning}.

\begin{figure*}[ht!]
    \centering
    \includegraphics[width=0.93\textwidth]{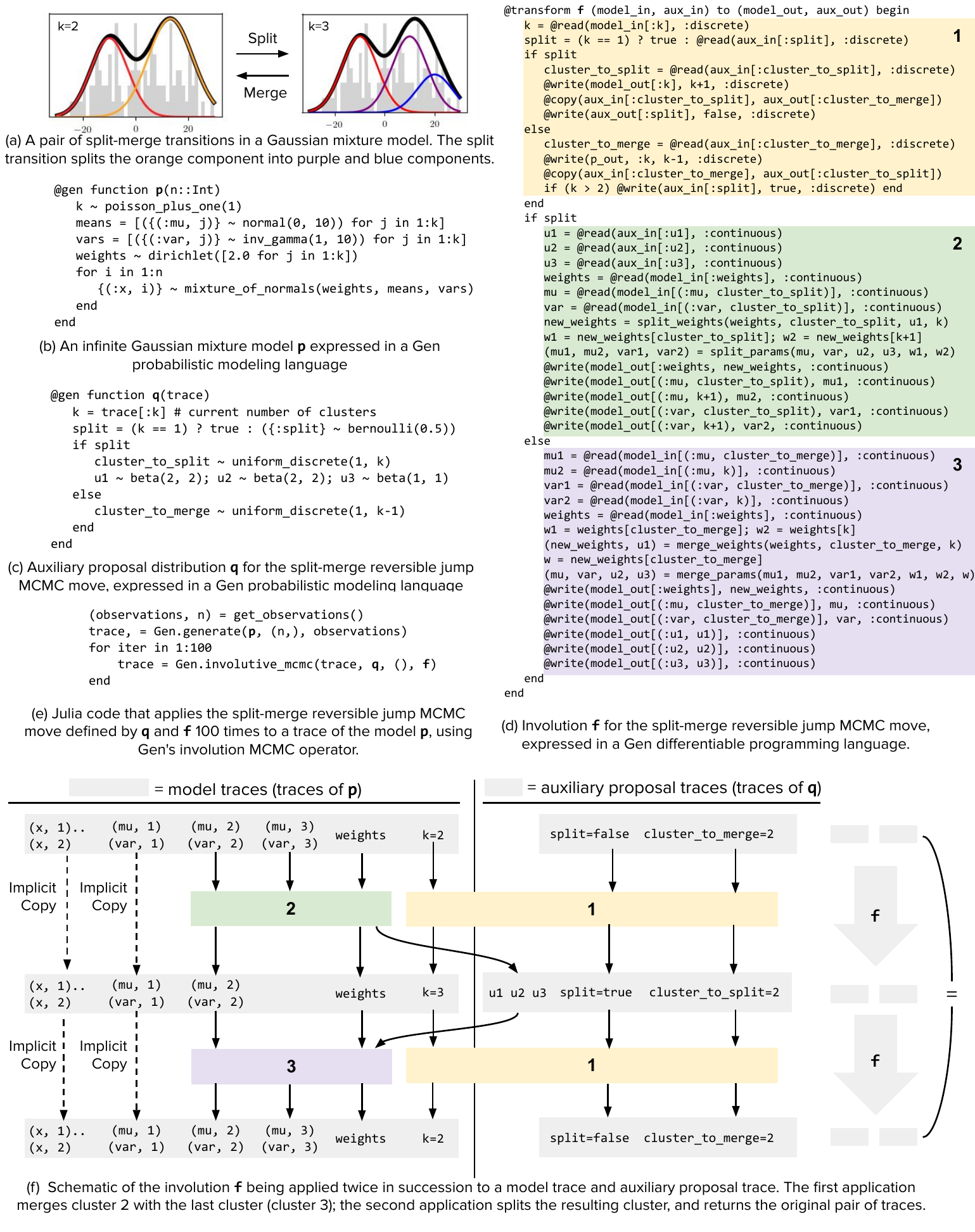}
    \caption{
Example of reversible jump MCMC~\citep{green1995reversible} implemented using involutive MCMC in Gen.
The example implements a `split-merge move' in a infinite Gaussian mixture model~\citep{richardson1997bayesian} using three Gen programs:
(1) a probabilistic program $\mathtt{p}$ encoding the generative model (shown in b),
(2) a probabilistic program $\mathtt{q}$ encoding an auxiliary probability distribution (shown in c),
and (3) a differentiable program $\mathtt{f}$ that encodes an involution on the space of pairs of traces of $\mathtt{p}$ and $\mathtt{q}$ (shown in d).
Gen's involutive MCMC operator (shown in e) automatically computes the acceptance probability.
}
    \label{fig:mixture}
\end{figure*}

\begin{figure*}[ht!]
    \centering
    \includegraphics[width=0.95\textwidth]{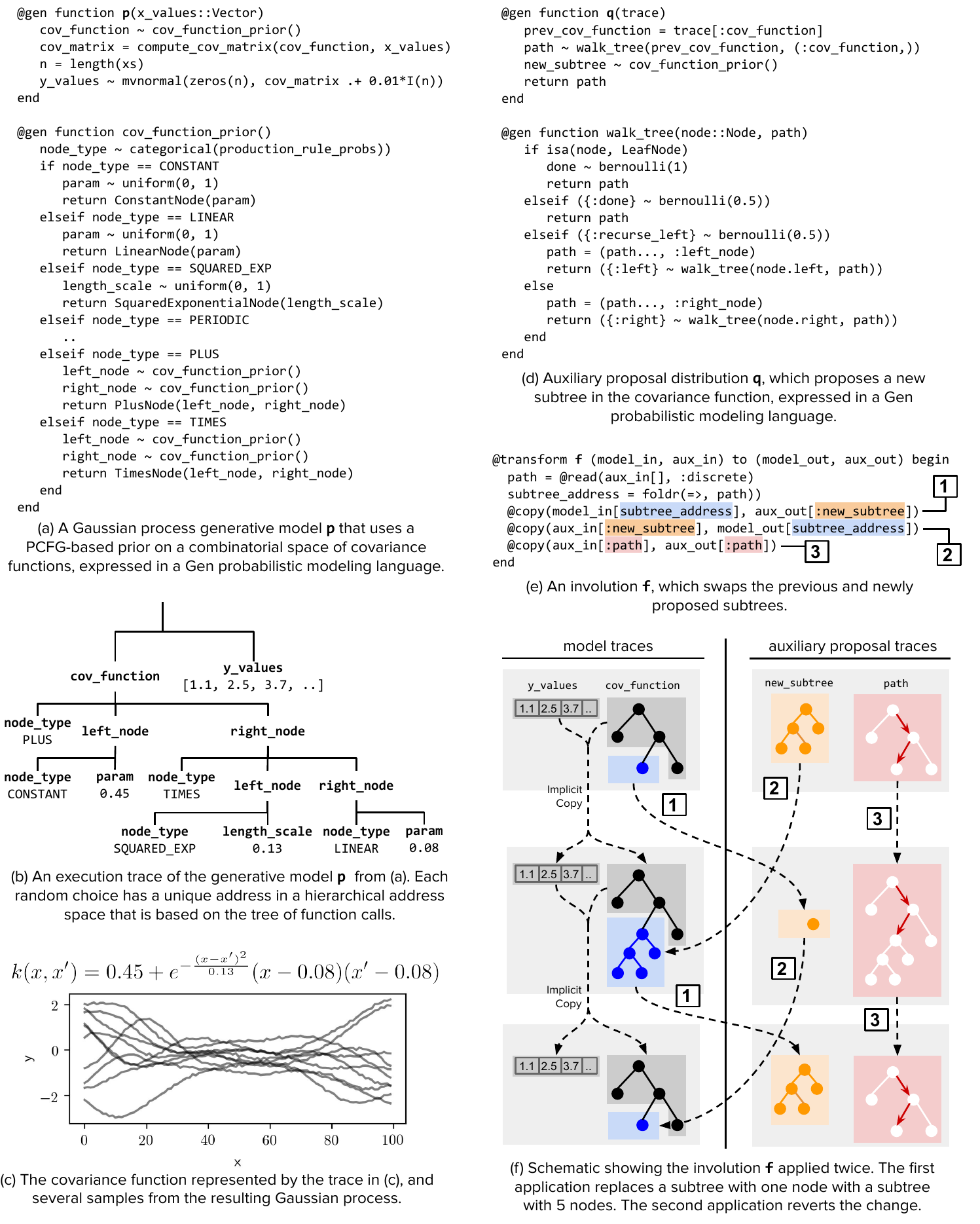}
    \caption{
A mixture kernel implemented using involutive MCMC in Gen, applied to infer the covariance function of a Gaussian process.
The prior on covariance functions is based on a probabilistic context-free grammar.
Each component kernel in the mixture replaces a subtree of the covariance function parse tree with a new subtree.
The mixture kernel chooses a random subtree to replace via a random walk on the parse tree.
The mixture kernel is composed from three Gen programs:
(1) a probabilistic program $\mathtt{p}$ encoding the generative model (shown in a),
(2) a probabilistic program $\mathtt{q}$ encoding an auxiliary probability distribution (shown in b), and
(3) a differentiable program $\mathtt{f}$ that encodes an involution (shown in d).
}
    \label{fig:structure-learning}
\end{figure*}

\clearpage
\section{INVOLUTIVE MCMC ON GENERAL STATE SPACES} 

Involutive MCMC is a general framework for constructing MCMC kernels that are stationary
for a target probability distribution $p$. Informally, the algorithm works as follows: starting at some state $x$, we first sample
an auxiliary variable $y \sim q_x$ from a state-dependent auxiliary distribution. We then apply an \textit{involution} $f$ to the pair $(x, y)$ to obtain $(x', y')$. 
Finally, we compute an acceptance probability $\alpha$ and either accept $x'$ as the new state, or reject it and repeat the previous state $x$.
Different choices of $q$ and $f$ recover many algorithms from the literature~\citep{imcmc}.

In this section, we present involutive MCMC for models $p$ and auxiliary kernels
$q$ defined over general state spaces.
By emphasizing general state spaces, we intend to clarify
a potential point of confusion regarding the involutive MCMC algorithm: as presented by
\citet{imcmc},
the acceptance probability $\alpha$ depends on the Jacobian of the involution $f$, but
it is not immediately clear how to define this Jacobian when $f$ may operate on samples
from arbitrary measurable spaces, rather than on vectors in $\mathbb{R}^n$. Our 
reformulation of the algorithm below is general enough to handle arbitrary model and auxiliary
distributions, and precise enough to enable automation via probabilistic and differentiable
programming: the rest of this paper uses it to develop a technique for deriving efficient implementations
of involutive MCMC algorithms automatically, 
given only declarative specifications of $p$, $q$, and $f$.

\subsection{General Involutive MCMC}

Let $(X, \Sigma_P, \mu_P)$ and $(Y, \Sigma_Q, \mu_Q)$ denote two general measure spaces with $\sigma$-finite $\mu_P$ and $\mu_Q$. 
Involutive MCMC (Algorithm~\ref{alg:involutive-mcmc}) implements a transition kernel that is invariant for a \textit{model distribution} given by $p : X \to [0, \infty)$, a probability density over $X$ with respect to $\mu_P$. Each iteration, the algorithm first samples auxiliary variables $y \in Y$ from an \emph{auxiliary distribution} $q_x$ based on the model's current state $x$: for each $x \in X$ such that $p(x) > 0$, $q_x : Y \to [0, \infty)$ is a probability density with respect to $\mu_Q$.

The resulting pair $(x, y)$ of the current model state and the newly sampled auxiliary state will be an element of the joint space $Z := \{(x, y) \in X \times Y \mid p(x) q_x(y) > 0 \}$. We can equip $Z$ with the $\sigma$-algebra 
$\Sigma := \{A \cap Z \mid A \in \Sigma_P \otimes \Sigma_Q\}$ (assuming $Z$ is a $\mu_P \times \mu_Q$-measurable set), and a reference measure $\mu(A) := (\mu_P \times \mu_Q)(A)$. 

Let $f : Z \to Z$ denote an involution ($f^{-1} = f$) such that the pushforward of $\mu$ under $f$, denoted $\mu \circ f^{-1}$, is absolutely continuous with respect to $\mu$, with Radon-Nikodym derivative $d (\mu \circ f^{-1}) / d\mu : Z \to [0, \infty)$. Involutive MCMC runs $f$ on $(x, y)$ to obtain $(x', y')$, then computes an acceptance probability $\alpha$.
With probability $\alpha$, the new state $x'$ is returned; otherwise, the previous state $x$ is repeated.

\begin{algorithm}[h]
\begin{algorithmic}
\Procedure{involutive-mcmc}{$p$, $q$, $f$, $x$}
    \State $y \sim q_x(\cdot)$ \Comment{Sample auxiliary state}
    \State $(x', y') \gets f(x, y)$ \Comment{Apply involution}
    \State $\alpha \gets
        \displaystyle \frac{p(x') q_{x'}(y')}{p(x) q_{x}(y)} \cdot \left( \frac{d (\mu \circ f^{-1})}{d \mu} (x, y)\right)$
    \State $r \sim \mathrm{Uniform}(0, 1)$
    \State \algorithmicif \, $r \le \alpha$ \algorithmicthen \, \Return $x'$ \algorithmicelse \, \Return $x$ 
\EndProcedure
\end{algorithmic}
\caption{Involutive MCMC}
\label{alg:involutive-mcmc}
\end{algorithm}

\begin{theorem}[Involutive MCMC is stationary]
Involutive MCMC defines a probability kernel $k$ on $X$ that is stationary with respect to the model probability distribution.
That is, $\int_X k_x(B) p(x) d\mu_X(dx) = \int_B p(x) d\mu_X(dx)$ for all $B \in \Sigma_X$.
\end{theorem}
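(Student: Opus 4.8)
The plan is to lift the problem to the joint space $Z$ and reduce it to a single stationarity statement there. Define the joint density $\pi(x,y) := p(x)\,q_x(y)$ on $Z$ with respect to $\mu$; since $\int_Y q_x(y)\,\mu_Q(dy)=1$, the $X$-marginal of the probability measure $\pi\,d\mu$ is exactly $p\,d\mu_P$. The first two lines of Algorithm~\ref{alg:involutive-mcmc} take a state $x$ distributed according to $p$ and produce the pair $(x,y)$ distributed according to $\pi\,d\mu$ (because $y\sim q_x$). The involution-plus-accept/reject step then defines a Markov kernel $K$ on $Z$ that proposes the deterministic move $z\mapsto f(z)$ and accepts it with probability $a(z):=\min(1,\alpha(z))$, where $\alpha(z)=\tfrac{\pi(f(z))}{\pi(z)}J(z)$ and $J:=\tfrac{d(\mu\circ f^{-1})}{d\mu}$ (note $\pi>0$ everywhere on $Z$ by construction, and the algorithm's ``accept if $r\le\alpha$'' realizes exactly the probability $\min(1,\alpha)$). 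The whole argument then has three steps: establish a change-of-variables identity for $f$; use it to show $K$ leaves $\pi\,d\mu$ invariant; and marginalize out $y$ to recover the claim for $k$.

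The technical heart is the change-of-variables machinery that replaces the classical Jacobian $|\det Df|$. From the defining properties of the pushforward and of the Radon--Nikodym derivative, for every nonnegative measurable $\phi$ one has
\begin{equation}
\int_Z \phi\circ f \, d\mu \;=\; \int_Z \phi\, d(\mu\circ f^{-1}) \;=\; \int_Z \phi\, J \, d\mu. \label{eq:cov}
\end{equation}
I would first combine \eqref{eq:cov} with $f\circ f=\mathrm{id}$ to show that the pushforward of $\mu\circ f^{-1}$ by $f$ is again $\mu$, yielding $\int_Z h\,d\mu=\int_Z (h\circ f)\,J\,d\mu$, and then apply \eqref{eq:cov} once more to the right-hand side to obtain the self-reciprocal Jacobian identity for the involution,
\begin{equation}
J(z)\,J(f(z)) = 1 \qquad \text{for } \mu\text{-a.e.\ } z. \label{eq:recip}
\end{equation}
Identity \eqref{eq:recip} is the abstract analogue of $\det Df(f(z))\det Df(z)=1$; in particular it forces $J>0$ a.e.\ and makes $\alpha(f(z))=1/\alpha(z)$ well defined. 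This is the step I expect to be the main obstacle, because it must be argued purely from the existence of the Radon--Nikodym derivative of the pushforward, with no smoothness or finite-dimensionality available, and with care taken on the $\mu$-null sets where $J$ might otherwise vanish.

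Given \eqref{eq:cov} and \eqref{eq:recip}, stationarity of $K$ on $Z$ is the usual Metropolis--Hastings computation, carried out in integral form to accommodate the deterministic (point-mass) proposal. For bounded measurable $h:Z\to\reals$, the expected value of $h$ after one step of $K$ started from $\pi\,d\mu$ is
\[
\int_Z \big[\,a(z)\,h(f(z)) + (1-a(z))\,h(z)\,\big]\,\pi(z)\,d\mu(z),
\]
so invariance reduces to $\int_Z a(z)\,\pi(z)\,h(f(z))\,d\mu(z)=\int_Z a(z)\,\pi(z)\,h(z)\,d\mu(z)$. Applying \eqref{eq:cov} to the left-hand side and using $f\circ f=\mathrm{id}$ rewrites it as $\int_Z a(f(z))\,\pi(f(z))\,J(z)\,h(z)\,d\mu(z)$, so it suffices to check the pointwise detailed-balance identity $a(f(z))\,\pi(f(z))\,J(z)=a(z)\,\pi(z)$. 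This follows from the elementary relation $\min(1,\alpha)=\alpha\,\min(1,1/\alpha)$ together with $\alpha(f(z))=1/\alpha(z)$ from \eqref{eq:recip}, which give $a(z)\,\pi(z)=\alpha(z)\,a(f(z))\,\pi(z)=a(f(z))\,\pi(f(z))\,J(z)$ directly.

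Finally I would project back to $X$. Writing $k_x(B)=\int_Y K\big((x,y),\,B\times Y\big)\,q_x(y)\,\mu_Q(dy)$ and integrating against $p\,d\mu_P$, Tonelli's theorem turns the left side of the theorem into $\int_Z K(z,B\times Y)\,\pi(z)\,d\mu(z)$; invariance of $K$ gives $(\pi\,d\mu)(B\times Y)$, which by the marginalization noted in the first paragraph equals $\int_B p(x)\,\mu_P(dx)$, as required (here $\mu_X=\mu_P$ and $\Sigma_X=\Sigma_P$). The only remaining routine checks are measurability of $z\mapsto K(z,\cdot)$ and the applicability of Tonelli, both of which follow from $\sigma$-finiteness of $\mu_P$ and $\mu_Q$.
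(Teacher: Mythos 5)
Your proof is correct, and its skeleton---lift to the joint space $Z$ with density $\pi(x,y)=p(x)q_x(y)$, prove invariance of the deterministic-proposal accept/reject kernel on $Z$, then marginalize out $y$ via Tonelli---matches the paper's three-stage proof (Sections~\ref{sec:involution-detailed-balance}, \ref{sec:involution-is-stationary}, and \ref{sec:involutive-mcmc-is-stationary} of the appendix); your final marginalization step is essentially identical to the paper's. The genuine difference is in how the middle stage is established. The paper routes through \citet{tierney1998note}: it introduces the symmetrized measure $\nu := \pi + \pi\circ f^{-1}$, takes the density $h = d\pi/d\nu$, and verifies Tierney's ratio condition on the acceptance probability by proving a lemma on pushforward Radon--Nikodym derivatives, namely $\frac{d(\nu\circ f^{-1})}{d(\mu\circ f^{-1})}(z) = \frac{d\nu}{d\mu}(f(z))$, followed by a chain-rule computation that also needs $d\mu/d\nu$ to exist. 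You bypass Tierney's characterization entirely: you derive the change-of-variables identity $\int \phi\circ f\,d\mu = \int \phi\, J\,d\mu$ for $J := d(\mu\circ f^{-1})/d\mu$, use it twice together with $f\circ f=\mathrm{id}$ to get the reciprocal identity $J(z)\,J(f(z))=1$ for $\mu$-a.e.\ $z$ (this argument is sound, and is in effect an integral-form version of the paper's pushforward lemma applied with $\nu:=\mu\circ f^{-1}$; the a.e.\ conclusion uses $\sigma$-finiteness of $\mu$, which holds here), and then carry out the Metropolis--Hastings computation directly, reducing invariance to the pointwise identity $a(f(z))\,\pi(f(z))\,J(z)=a(z)\,\pi(z)$, verified from $\min(1,\alpha)=\alpha\min(1,1/\alpha)$ and $\alpha(f(z))=1/\alpha(z)$. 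What each approach buys: the paper's route makes the link to the existing literature explicit, reuses a citable characterization, and records reversibility as an if-and-only-if; your route is more self-contained and elementary, avoids introducing the auxiliary measure $\nu$ and the attendant positivity bookkeeping, and isolates the clean reusable fact $J\cdot(J\circ f)=1$, which is exactly the right measure-theoretic analogue of $\det Df(f(z))\det Df(z)=1$ that you identified as the crux.
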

\begin{proof}
The proof is presented in stages in the appendix (see Section~\ref{sec:involution-detailed-balance}, Section~\ref{sec:involution-is-stationary}, and Section~\ref{sec:involutive-mcmc-is-stationary}).
\end{proof}

\subsection{Probability Distributions on Dictionaries}
\label{sec:dists-on-dicts}
While maximally general, the measure-theoretic formulation of involutive MCMC in Algorithm~\ref{alg:involutive-mcmc} is not amenable to an automated implementation, because it does not indicate how to compute the Radon-Nikodym derivative that is required for the acceptance probability, and it is unclear how to specify the probability measures involved.

While restricting the state space to vectors of real numbers would address these issues, we seek a 
representation that remains flexible enough to represent complex hybrid state spaces with numeric, symbolic, and structure uncertainty.
Therefore, we use state spaces consisting of 
\textit{finite dictionaries} that map (possibly random) keys to (possibly random) values. 
Dictionaries include vectors as a special case (a vector $\mathbf{x} \in \mathbb{R}^n$ can
be represented as a dictionary mapping the keys $1, \dots, n$ to the values $x_1, \dots, x_n$),
but are more flexible: different keys can hold values of different types (e.g. integers, strings), and we
can also consider distributions in which the set of keys is itself random, which is useful for model
selection problems and structure uncertainty more generally.

This section describes probability distributions on dictionaries, and gives a constructive definition of the involutive MCMC acceptance probability in this setting in terms of a Jacobian.
Section~\ref{sec:automating} will then show how probability distributions on dictionaries can be specified with probabilistic programs, and how probabilistic programming techniques can automatically compute probability densities on spaces of dictionaries.

\textbf{The space of finite dictionaries.}
We fix a countably infinite set $\mathcal{K}$ of possible keys, such that each key $k$ is either called \emph{discrete} ($k \in \mathcal{I}$) or \emph{continuous} ($k \in \mathcal{J}$), where $\mathcal{K} = \mathcal{I} \cup \mathcal{J}$.\footnote{
It is possible to assign a general measure space to each key, but this is not necessary for our purposes.}
Let $V_k$ denote the set of possible values for key $k$, where $V_k$ is a countable set for each discrete key, and where $V_k = \mathbb{R}^{d_k}$ for each continuous key for some $d_k$.
Given a set of keys $K$, let $V_K = \times_{k \in K} V_k$ denote the set of \textit{assignments} of values to each key.
Then the set of all finite dictionaries is $\mathcal{D} := \bigcup_{K \subset \mathcal{K}, |K| < \infty} \{(K, \mathbf{x}) \mid \mathbf{x} \in V_K\}$. That is, a dictionary specifies 
a finite set of keys $K \subset \mathcal{K}$ at which it has values, and an assignment $\mathbf{x}$ of values $x_k$ for each.

\textbf{Relationship to representation of~\citet{green1995reversible}.}
\citet{green1995reversible} uses a state space that is the countable union of `models', where each model is typically a vector of real-valued parameters.
Dictionaries have substantially more structure:
Instead of monolithic `models', dictionaries use a more elaborate discrete state that includes the set of keys and the assignment to the discrete keys. 
Also, because continuous keys play the role of real-valued parameters, it is possible to express that a given real-valued parameter is shared between models.
The additional structure of dictionaries enables the automation techniques in Section~\ref{sec:automating}.

\textbf{A measure space of finite dictionaries.}
We associate a measure $\mu_k$ on $V_k$ with each key $k$---the counting measure for each discrete key and the Lebesgue-measure on $\mathbb{R}^{d_k}$ for each continuous key.
For each finite set of keys $K$, we make $V_K$ a measure space using the standard product $\sigma$-algebra $\Sigma_K = \otimes_{k \in K} \Sigma_k$ and the product measure $\mu_K = \times_{k \in K} \mu_k$.
We equip $\mathcal{D}$ with the $\sigma$-algebra $\Sigma_{\mathcal{D}} := \{\bigcup_{K \subset \mathcal{K}, |K| < \infty} \{(K, \mathbf{x}) \mid \mathbf{x} \in B_K\} \mid B_K \in \Sigma_K \, \text{for each finite } K \subset \mathcal{K}\}$ to obtain a measurable space
of dictionaries.
A reference measure $\mu_\mathcal{D}$ on this space can be constructed using the product measures $\mu_K$: we set $\mu(B) := \sum_{K \in \mathcal{K}, |K| < \infty} \mu_K(\{\mathbf{x} \mid (K, \mathbf{x}) \in B\})$.

\textbf{Notation for dictionaries.} Given a dictionary $m = (K, \mathbf{x})$, we write $K_m$ for $K$ and $m[k]$ for the value $x_k$ associated with a key $k \in K$.
We also denote specific dictionaries using notation $\{k_1 \mapsto v_1, k_2 \mapsto v_2, \ldots\}$.
For example, the dictionary $(K, \mathbf{x})$ with $K = \{1, \mathtt{"foo"}\}$ and $x_1 = 0.123$ and $x_{\mathtt{"foo"}} = 5$ is denoted 
$\{1 \mapsto 0.123, \mathtt{"foo"} \mapsto 5\}$.

\textbf{Probability distributions on finite dictionaries.} 
When $V_k$ is discrete for all keys $k$, a probability distribution on dictionaries is defined by a probability mass function $p : \mathcal{D} \to [0, 1]$ that assigns a probability $p(m)$ to each dictionary $m \in \mathcal{P}$ such that $\sum_{m \in \mathcal{D}} p(m) = 1$.
More generally a probability distribution on dictionaries is defined by a probability density $p : \mathcal{D} \rightarrow [0, \infty)$ such that $\int p(m) \mu_\mathcal{D}(\text{d}m) = 1$. 
The probability mass is distributed among the finite sets of keys $K \subseteq \mathcal{K}$:
\begin{equation} \label{eq:integral}
1 = \sum_{\substack{K \subseteq \mathcal{K}\\|K| < \infty}}
\left( \sum_{\mathbf{x}_1} \left( \int_{\mathbb{R}^{d_K}} p((K, (\mathbf{x}_1, \mathbf{x}_2))) d\mathbf{x}_2 \right) \right)
\end{equation}
where $d_K : = \sum_{k \in K \cap \mathcal{J}} d_{k}$ is the total continuous dimension for keys $K$, and where $\mathbf{x}_1$ is an assignment to the discrete choices in $K$ and $\mathbf{x}_2$ is an assignment to the continuous choices.

We now give an example to build intuition.
Consider a generative model of univariate data points $y_1, \dots, y_n$ from a Gaussian mixture with an unknown number of components $k$, each with unknown mean $m_i$ and variance $s_i$. If we place a Gaussian prior on $m_i$, an inverse Gamma prior on $s_i$, and a Poisson prior on $k$, the resulting density on dictionaries $d$ is:
\begin{align*}
p(d) =
\begin{array}{l}
p_{\mathrm{poisson}(3)}(d[\mathtt{k}]) \cdot\\
\prod_{i=1}^{d[\mathtt{k}]} p_{\mathrm{normal}(0, 1)}(d[\mathtt{m}_i]) \cdot\\
\prod_{i=1}^{d[\mathtt{k}]} p_{\mathrm{inversegamma}(1, 10)}(d[\mathtt{s}_i]) \cdot\\
\prod_{i=1}^n  \frac{1}{d[\mathtt{k}]} \sum_{j=1}^{d[\mathtt{k}]} \cdot p_{\mathrm{normal}(d[\mathtt{m}_j], d[\mathtt{s}_i])}(d[\mathtt{y}_i])
\end{array}
\end{align*}
when $K_d = \{\texttt{k}, \texttt{y}_\texttt{1}, \dots, \texttt{y}_\texttt{n}\} \cup \{\texttt{m}_i \mid 1 \leq i \leq d[\texttt{k}]\} \cup \{\texttt{s}_i \mid 1 \leq i \leq d[\texttt{k}]\}$, and 0 otherwise. In this case, we have $V_\texttt{k} = \mathbb{N}$ with the counting measure for $\mu_\texttt{k}$, and for all other keys $k \in \mathcal{K}$, $V_k = \mathbb{R}$ with the Lebesgue measure for $\mu_k$.
For each $j \in \{0, 1, \ldots\}$, the probability mass assigned to key set
$\{\texttt{k}, \texttt{y}_\texttt{1}, \dots, \texttt{y}_\texttt{n}\} \cup \{\texttt{m}_i \mid 1 \leq i \leq j\} \cup \{\texttt{s}_i \mid 1 \leq i \leq j\}$
is $p_{\mathrm{poisson}(s)}(j)$ (via Equation~(\ref{eq:integral})).

\textbf{Conditional distributions via disintegration.} Consider a probability density $p$ on the space $\mathcal{D}$ of dictionaries. 
We say a key $k \in \mathcal{K}$ \textit{almost always appears} if $\int \mathbf{1}[k \in K_m] p(m) \mu_\mathcal{D}(\text{d}m) = 1$. Suppose $B$ is a set of keys that almost always appear for $p$, and that $b = (B, \mathbf{b})$ is a dictionary with keys $B$. Furthermore, let $(K_m, \mathbf{m}) \oplus (K_n, \mathbf{n}) := (K_m \cup K_n, (\mathbf{m}, \mathbf{n}))$ be the \textit{merge} of two dictionaries $m$ and $n$ defined on disjoint key sets $K_m$ and $K_n$. Then we can define the conditional density $p(d \mid b) := \mathbf{1}[K_d \cap B = \emptyset] \frac{p(d \oplus b)}{\int_{\{m \mid K_m \cap B = \emptyset\}} p(m \oplus b) \mu_\mathcal{D}(\text{d}m)}$ when the denominator is finite.
If each each key $k \in B$ is discrete (e.g. $\mu_k$ is the counting measure), then this definition corresponds to the ordinary notion of conditioning on an event (namely, the event that a sample from $p$ agrees with the dictionary $b$ on all keys in $B$). When this is not the case, it corresponds to a more general measure-theoretic notion called disintegration~\citep{chang1997conditioning}. 

Consider the infinite univariate mixture model, and the conditional density given observed data $(B, \mathbf{b}) := \{\mathtt{y}_1 \mapsto y_1, \ldots, \mathtt{y}_{\mathtt{n}} \mapsto y_{\mathtt{n}}\}$.
The conditional density $p(d | b)$ is nonzero only if $K_d = \{\texttt{k}\} \cup \{\texttt{m}_i \mid 1 \leq i \leq k\} \cup \{\texttt{s}_i \mid 1 \leq i \leq k\}$ for some $k$ ($d$ does not contain y-values), and the denominator in the definition of $p(d | b)$ simplifies to the familiar sum of marginal likelihoods over all $k$, where each marginal likelihood is a Riemann integral over $\mathbb{R}^{2k}$.

\subsection{Involutive MCMC with Dictionaries}


Suppose that the model distribution and auxiliary distributions are probability distributions on dictionaries, with densities $p$ and $q_x$.
Then, $X$ and $Y$ are both sets of dictionaries, and the joint space $Z$ is a set of pairs $(x, y)$ of dictionaries with keys $\mathcal{K}_P$ and $\mathcal{K}_Q$ respectively, so that $Z = X \times Y \subseteq \mathcal{D}_P \times \mathcal{D}_Q$ where $\mathcal{D}_P$ is the set of dictionaries on keys taken from $\mathcal{K}_P$ and similarly for $\mathcal{D}_Q$.
To simplify the notation, and without loss of generality, we will assume that $\mathcal{K}_P$ and $\mathcal{K}_Q$ are disjoint\footnote{If $\mathcal{K}_P$ and $\mathcal{K}_Q$ are not disjoint then, they can be made so by adding a different prefix to the keys of each set.}, and we define $Z := \{ x \oplus y : p(x)q_x(y) > 0\} \subseteq \mathcal{D}$,
where $\mathcal{D}$ is the set of dictionaries on keys from $\mathcal{K}_P \cup \mathcal{K}_Q$
(recall $x \oplus y$ denotes the dictionary resulting from merging dictionaries $x$ and $y$ with disjoint keys).

Suppose there is a countable partition of $Z$ into $\{Z_e : e \in E\}$ such that if $(K_1, \mathbf{x}_1)$ and $(K_2, \mathbf{x}_2)$ are two dictionaries in the same component $Z_e$, then $K_1 = K_2$ and they agree on all \textit{discrete} values: $x_{1k} = x_{2k}$ for all $k \in K_1 \cap \mathcal{I}$.
Then each set $Z_e$ is isomorphic to a Euclidean space of assignments to the continuous keys in the two dictionaries.
Suppose there is an involution $g : E \to E$ between elements of the partition, and a family of continuously differentiable bijections $h_{e} : Z_e \to Z_{g(e)}$ indexed by $e \in E$, with $h_{e} = h_{g(e)}^{-1}$.
Let $e(z) \in E$ denote which element of the partition a dictionary $z \in Z$ belongs to.
Then $f : Z \to Z$ given by $f(z) := h_{e(z)}(z)$ is an involution:
\begin{equation*}
f(f(z)) = h_{e(f(z))}(h_{e(z)}(z)) = h_{g(e(z))}(h_{e(z)}(z)) = z.
\end{equation*}
Let $|J h_e|(z)$ denote the absolute value of the determinant of the Jacobian of $h_e$, evaluated at $z$.
Then, the acceptance probability in Algorithm~\ref{alg:involutive-mcmc} simplifies to:
\begin{equation} \label{eq:acceptance-ratio-dictionaries}
\frac{p(x') q_{x'}(y')}{p(x) q_{x}(y)} \cdot |J h_e|(z)
\end{equation}
One example of a valid partition of $Z$ is given by equivalence classes of the following equivalence relation:
\[
z_1 \sim z_2 \iff (K_{z_1} = K_{z_2}) \land (z_1[k] = z_2[k] \; \forall k \in K_{z_1} \cap \mathcal{I})
\]
(dictionaries are equivalent if they contain the same keys and they agree on the value of all discrete keys).
See Section~\ref{sec:radon-nikodym-special-case} of the appendix for details.

\begin{algorithm*}[h!]
\begin{algorithmic}
\Require probabilistic programs $\mathcal{P}$ and $\mathcal{Q}$; differentiable program $\mathcal{F}$; initial state $x$; observations $b$
\vspace{1mm}
\Procedure{auto-involutive-mcmc}{$\mathcal{P}$, $\mathcal{Q}$, $\mathcal{F}$, $x$, $b$}
    \State\vspace{-2mm}
    \LineComment{Sample $y \sim q_x(\cdot)$ and compute log-density via probabilistic program $\mathcal{Q}$}
    \State $(y, \log q_x(y)) \gets \textproc{trace-and-score}(\mathcal{Q}_x)$
    \State\vspace{-2mm}
    \LineComment{{\small Compute $(x' \oplus y') = f(x \oplus y)$ and log(Radon-Nikodym derivative) via differentiable program $\mathcal{F}_b$}}
    \State $(x' \oplus y', \log D) \gets \textproc{run-involution}(\mathcal{F}, x \oplus y, b)$
    \State\vspace{-2mm}
    \LineComment{Compute log-density via probabilistic program $\mathcal{P}$}
    \State $\log \tilde{p}(x \oplus b) \gets \textproc{score}(\mathcal{P}, x \oplus b)$
    \State\vspace{-2mm}
    \LineComment{Compute log-density via probabilistic program $\mathcal{P}$}
    \State $\log \tilde{p}(x' \oplus b) \gets \textproc{score}(\mathcal{P}, x' \oplus b)$ 
    \State\vspace{-2mm}
    \LineComment{Compute log-density via probabilistic program $\mathcal{Q}$}
    \State $\log q_{x'}(y') \gets \textproc{score}(\mathcal{Q}_{x'}, y')$ 
    \State\vspace{-2mm}
    \LineComment{Compute acceptance probability and accept or reject}
    \State $\alpha \gets \min\{1, \exp(\log \tilde{p}(x' \oplus b) - \log \tilde{p}(x \oplus b) + \log q_{x'}(y') - \log q_x(y) + \log D)\}$ 
    \State \textbf{with probability} \,$\alpha$ \,\Return $x'$ \algorithmicelse \, \Return $x$  \algorithmicend
\EndProcedure\\
\vspace{1mm}
\begin{minipage}[t]{0.49\linewidth}
\Procedure{trace-and-score}{$\mathcal{P}$}
   \State $s \gets 0$
   \State $x \gets \{\}$
   \execute[$\mathcal{P}$], but with $\mathtt{"}k \sim \mathrm{distribution}\mathtt{"} \equiv$ (
        \State 1. set $v \sim \mathrm{distribution}$
        \State 2. set $x[k] \gets v$
        \State 3. set $s \gets s + \textproc{logpdf}(\mathrm{distribution}, v)$
        \State 4. evaluate to $v$)
    \END
   \State \Return $(x, s)$
\EndProcedure
\end{minipage}%
\hfill
\begin{minipage}[t]{0.49\linewidth}
\Procedure{score}{$\mathcal{P}$, $x$}
    \State $s \gets 0$
    \State $K \gets\{\}$
    \execute[$\mathcal{P}$], but with $\mathtt{"}k \sim \mathrm{distribution}\mathtt{"} \equiv$ (
        \State 1. set $v \gets x[k]$
        \State 2. set $K \gets K \cup \{k\}$
        \State 3. set $s \gets s + \textproc{logpdf}(\mathrm{distribution}, v)$
        \State 4. evaluate to $v$)
    \END
   \State \algorithmicif\hspace{1mm} $K \ne K_x$ \algorithmicthen\hspace{1mm} $s \gets -\infty$ \algorithmicend
   \State \Return $s$
\EndProcedure
\end{minipage}\\
\vspace{1mm}
\Procedure{run-involution}{$\mathcal{F}$, $z$, $b$}
    \LineComment{Run the involution, and keep track of copied, read, and written continuous keys}
    \State $z' \gets \{\}$ \Comment{Initialize empty output dictionary}
    \execute[$\mathcal{F}$], but with
        \State $\mathtt{"@write}(k, v)\mathtt{"} \equiv$
            (\algorithmicif\hspace{1mm} $k \in \mathcal{I}$
            \algorithmicthen\hspace{1mm} set $z'[k] \gets v$
            \algorithmicelse\hspace{1mm} set $z'[k] \gets v$ and set $W \gets W \cup \{k\})$
        \State $\mathtt{"@read}(k)\mathtt{"} \equiv$
            (\algorithmicif\hspace{1mm} $k \in \mathcal{I}$
            \algorithmicthen\hspace{1mm} evaluate to $z[k]$
            \algorithmicelse\hspace{1mm} set $R \gets R \cup \{k\}$ and evaluate to $z[k]$)
        \State $\mathtt{"@copy}(k_1, k_2)\mathtt{"} \equiv$ (set $z'[k_2] \gets z[k_1]$ and set $C \gets C \cup \{k_1\}$)
        \State (and reading from observations $b$ for any $k \in K_b$)
    \END
    \LineComment{Use automatic differentiation of $\mathcal{F}$ to compute Jacobian, skipping copied addresses}
    \For{$i$ in $1$ to $|W|$}
        \State $k_i \gets W_i$ \Comment{Pick $i$th key in $W$; the order does not matter}
        \State $J[:,i] \gets \nabla_{\mathbf{z}_{R \setminus C}} \left( h_{e(z)}[a_i] \right)$ \Comment{{\small Gradient of $z'[k_i] = h_{e(z)}[k_i]$ w.r.t. non-copied continuous inputs ($z_k$ for $k \in R \setminus C$)}}
    \EndFor
    \State \Return $(z', \log(|\det(J)|))$
\EndProcedure
\end{algorithmic}
\caption{Automated Involutive MCMC}
\label{alg:auto-involutive-mcmc}
\end{algorithm*}

\section{AUTOMATING INVOLUTIVE MCMC WITH TRACES} \label{sec:automating}

Involutive MCMC is a general framework that can be used to develop diverse MCMC algorithms
for models over arbitrary state spaces. We wish to \textit{automate} the implementation
details for involutive MCMC algorithms, given only a specification of the model $p$, the
auxiliary distribution $q$, and the involution $f$.
To do so, we require a representation for the distributions $p$ and $q$ that is flexible
enough to represent the full variety of models and auxiliary distributions of interest to
practitioners. The representation must support density evaluation and sampling. It is also desirable that the representation be \textit{structured}: the more
information available to us (e.g., about the decomposition of a distribution's state space into 
individual univariate and multivariate random variables, or about conditional independence relationships in a model),
the easier it will be for the implementation to exploit this structure automatically by using
more efficient data structures and low-level manipulations.

\subsection{Trace-Based Probabilistic Programming} \label{sec:probabilistic-programming}

\textit{Probabilistic programs} are flexible and structured representations for  probability distributions. Unlike densities (but like Bayesian networks), probabilistic programs can be efficiently sampled and contain explicitly represented information
about some conditional independence relationships in a model. 
But unlike Bayesian networks, they do not assume a fixed number of random variables, state space dimension, or dependency structure.

At the most basic level, a probabilistic program is a program that makes random choices. 
Any such program induces a probability distribution over its possible \textit{execution traces}, records of each random choice it makes.
If each random choice is associated with a unique \textit{address} from the set of
dictionary keys $\mathcal{K}$, 
then these traces can be viewed as finite dictionaries, mapping the address of each random choice to its value. 
The distribution induced by a probabilistic program over
its execution traces can thus be understood as a measure on the space $(\mathcal{D}, \Sigma_\mathcal{D}, \mu_\mathcal{D})$ introduced in the previous section. Furthermore, densities of trace distributions with respect to $\mu_\mathcal{D}$ are typically easy to compute.

In this section, we introduce a probabilistic programming language from the Gen probabilistic programming system~\citep{cusumano2019gen}, and present a technique for automating the implementation of involutive MCMC algorithms when the model $p$ and auxiliary distribution $q_x$ are both represented as
probabilistic programs in this language.
But the technique is not limited to the Gen system---we also provide the implementation of a minimal probabilistic programming language in PyTorch that supports the automation technique.

\subsubsection{A Probabilistic Programming Language}

Our probabilistic programming language augments the syntax of Julia~\citep{bezanson2017julia}
with a single new construct, the `$\{\mathtt{address}\} \sim \mathtt{distribution}$' expression, for making a \textit{named} random choice. An execution trace of a Gen probabilistic program is a dictionary that can be sampled by running the program according to Julia's usual semantics, and upon encountering an expression of the form `$\{\mathtt{address}\} \sim \mathtt{distribution}$', (i) evaluating the \textit{address expression} (`$\mathtt{address}$') to obtain an address $k \in \mathcal{K}$; (ii) evaluating the \textit{distribution expression} (`$\mathtt{distribution}$') to obtain a probability distribution over the measurable space $V_k$; (iii) sampling a value $x_k$ from this distribution and adding the mapping $\{k \mapsto x_k\}$ to the execution trace; and (iv) returning the sampled value $x_k$ to the program, to continue execution. When execution terminates, the execution trace has accumulated a mapping for each random choice encountered during the program's execution.

For example, consider the probabilistic program below, which defines a Gaussian mixture model with an unknown number of components:
\noindent
\begin{center}
\begin{tabular}{c}
{\begin{lstlisting}[basicstyle=\small\ttfamily]
@gen function p(n::Int)
  k ~ poisson_plus_one(1)
  means = [
    ({(:mu, j)} ~ normal(0, 10)) for j in 1:k]
  vars = [
    ({(:var, j)} ~ inv_gamma(1, 10)) for j in 1:k]
  weights ~ dirichlet([2.0 for j in 1:k])
  for i in 1:n
    {(:x, i)} ~ mixture_of_normals(
        weights, means, vars)
  end
end
\end{lstlisting}}
\end{tabular}
\end{center}
The program \texttt{p} accepts as input an integer \texttt{n}, a number of data points. Each $\texttt{n}$ defines
a distinct distribution over dictionaries.
The first line of the program samples a number of mixture components from a Poisson prior using the address \texttt{:k}. (This line could also be written \texttt{k = \{:k\} $\sim$ poisson\_plus\_one(1)}: the \textit{address} is the symbol \texttt{:k}, and the result of the choice is assigned to a Julia variable called \texttt{k}. Because this is a common pattern, Gen provides the syntactic sugar \texttt{x $\sim$ d} as shorthand for \texttt{x = \{:x\} $\sim$ d}.) The program then samples \texttt{k} means and \texttt{k} variances from Gaussian and inverse Gamma priors, respectively. Each of these 2\texttt{k} random choices has its own address, determined by the address expression preceding it. For example, the mean for the fourth mixture component (if \texttt{k} $\geq 4$) has address \texttt{(:mu, 4)}. The mixture weights are then sampled at the address \texttt{:weights} from a Dirichlet distribution, and $\texttt{n}$ data points are sampled at addresses \texttt{(:x, 1)}, \dots, \texttt{(:x, n)}. The random choice at address $\mathtt{:}\mathtt{k}$ is discrete with $V_{\texttt{:k}} = \integers_{\ge 1}$, and the other random choices are continuous. 
Overall, the program defines a distribution over traces with the following density over traces $(K, \mathbf{x})$ with respect to $\mu_\mathcal{D}$:
\begin{align*}
p(n)((K, \mathbf{x})) =
\begin{array}{l}
p_{\mathrm{poisson}(1)}(x_{\mathtt{:}\mathtt{k}}-1) \cdot\\
\prod_{i=1}^{x_{\mathtt{:}\mathtt{k}}} p_{\mathrm{normal}(0, 10)}(x_{(\mathtt{:}\mathtt{mu}, i)}) \cdot\\
\prod_{i=1}^{x_{\mathtt{:}\mathtt{k}}} p_{\mathrm{inversegamma}(1, 10)}(x_{(\mathtt{:}\mathtt{var}, i)}) \cdot\\
p_{\mathrm{dirichlet}(2.0,\dots,2.0)}(x_{\mathtt{:}\mathtt{weights}})\cdot\\
\prod_{i=1}^n \sum_{j=1}^{x_{\mathtt{:}\mathtt{k}}} x_{\mathtt{:}\mathtt{weights}}[j] \cdot\\
\;\;\;\;\;\;\;\;\;\;\;\;\;\;\;\;\;\; p_{\mathrm{normal}(x_{\mathtt{(:mu, j)}}, x_{(\mathtt{:}\mathtt{var}, j)})}(x_{(\mathtt{:}\mathtt{x}, i)})
\end{array}
\end{align*}
when $K$ contains exactly the addresses $\mathtt{:}\mathtt{k}, \mathtt{:}\mathtt{weights}, (\mathtt{:}\mathtt{x}, i)$ for $i = 1, \dots, n$, and $(\mathtt{:}\mathtt{mu}, j)$ and $(\mathtt{:}\mathtt{var}, j)$ for $j = 1, \dots, x_{\mathtt{:}\mathtt{k}}$; otherwise, the density is 0.

\paragraph{Automatic Sampling and Density Computation for Probabilistic Programs.}
Sampling traces from, and computing densities of dictionaries under, the distribution on traces
induced by a probabilistic program is straight-forward to do, using a standard technique
in probabilistic programming.

We illustrate this technique in the \textsc{trace-and-score} and \textsc{score} subroutines in Algorithm~\ref{alg:auto-involutive-mcmc}.
The \textsc{trace-and-score} subroutine samples a trace by running a probabilistic program,
but recording the value of every encountered choice into a dictionary, which it returns once execution has terminated. It also returns the log density of the trace, calculated by accumulating densities of the individual choices it encounters. (Note that this procedure is only valid if the program
halts with probability 1. Otherwise, it could loop infinitely, and even if it terminates,
the density will be incorrect.) 

The density of an arbitrary dictionary $(K, \mathbf{x})$ under any probabilistic program's distribution on traces can be computed using \textsc{score}. The idea is to run the probabilistic program, and whenever a random choice `\texttt{\{address\} $\sim$ distribution}' is encountered, to look up the value $x_k$ in the dictionary, compute its density under the primitive distribution $d$, multiply this density into a running total, and return control to the probabilistic program as if the sampling instruction had executed and returned $x_k$. At the end, the running total can be returned as the trace's density. If at any point an address $k \not\in K$ is encountered, or if not all addresses in $K$ have been visited at the end of execution, the algorithm returns $0$ as the density.

\subsubsection{Automatically Computing Density Ratios in Involutive MCMC}

We can use two probabilistic programs $\mathcal{P}$ and $\mathcal{Q}$ to specify the model density $p$ and auxiliary densities $q_x$ that appear in involutive MCMC
(Algorithm~\ref{alg:involutive-mcmc}).\footnote{The programs must satisfy a mild technical requirement for our formalism to go through; see Section~\ref{sec:technical-requirement}}
In this case, the program $\mathcal{Q}$ 
accepts a trace $x$ of $\mathcal{P}$ as input, we denote a probabilistic program $\mathcal{Q}$
applied to input $x$ by $\mathcal{Q}_x$.

Typically, we cannot use a probabilistic program $\mathcal{P}$ to represent the target distribution directly: probabilistic programs implement simulators for a distribution, but target densities are typically not tractable to simulate (hence the need for MCMC). Instead, we may wish to sample from a target $p$ that arises from \emph{conditioning} a probabilistic program $\mathcal{P}$'s distribution over traces on observations of the values at some addresses. Let $\tilde{p}$ denote $\mathcal{P}$'s distribution over traces, and let $b = (B, \textbf{b})$ be a dictionary of observations, as described in Section~\ref{sec:dists-on-dicts}. Then, as shown in that section, the target distribution $p(x) = \tilde{p}(x \mid b) = \frac{\tilde{p}(x \oplus b)}{\mathcal{L}(b)}$, where $\mathcal{L}(b) = \int_{\{m \mid m \cap B = \emptyset\}} p(m \oplus b) \mu_\mathcal{D}(\text{d}m)$ is the marginal likelihood of $b$, and does not depend on $x$. Then $\tilde{p}(x \oplus b) = \mathcal{L}(b)p(x)$. If we use $\tilde{p}(x \oplus b)$ in place of $p(x)$ to compute the ratio of densities in Algorithm~\ref{alg:involutive-mcmc}, we will wind up with the same output, because $\mathcal{L}(b)$ will cancel in the numerator and denominator:
\begin{equation}
\frac{p(x') q_{x'}(y')}{p(x) q_{x}(y)}
= \frac{\tilde{p}(x' \oplus b) q_{x'}(y')}{\tilde{p}(x \oplus b) q_x(y)}
= \frac{\tilde{p}(x' | b) q_{x'}(y')}{\tilde{p}(x | b) q_x(y)}
\end{equation}

Thus, this ratio can be computed term-by-term using \textsc{score} on the probabilistic programs $\mathcal{P}$ and $\mathcal{Q}$ (Algorithm~\ref{alg:auto-involutive-mcmc}): to compute $q_x(y')$, we run the algorithm directly on the program $\mathcal{Q}_x$, and to compute $p(x)$ and $p(x')$, we actually merge the dictionary $x$ with the observations $b$ and compute $\mathcal{L}(b)p(x) = \tilde{p}(x \oplus b)$ instead, by running the algorithm on $\mathcal{P}$ with trace $x \oplus b$. The density $q_x(y)$ can be computed while $y$ is being sampled, using \textsc{trace-and-score}.

Section~\ref{sec:sparsity} describes a more efficient approach that exploits sparsity in the involution and cancellations in the acceptance ratio for improved efficiency, but requires a more sophisticated probabilistic programming runtime system.

Figure~\ref{fig:mixture}b and Figure~\ref{fig:mixture}c show examples of probabilistic programs $\mathcal{P}$ and $\mathcal{Q}$ respectively, for a split-merge reversible jump move.

\subsection{Differentiable Programming with Traces} \label{sec:differentiable-programming-traces}

Section~\ref{sec:probabilistic-programming} showed that if the densities $p$ and $q$ are specified using probabilistic programs $\mathcal{P}$ and $\mathcal{Q}$, then the density ratio in the acceptance probability for involutive MCMC on dictionaries (Equation~(\ref{eq:acceptance-ratio-dictionaries})) can be automated using probabilistic programming techniques.
This section shows that if the involution $f$ is specified using a differentiable program $\mathcal{F}$ that transforms the execution traces of probabilistic programs, then the Jacobian factor in Equation~(\ref{eq:acceptance-ratio-dictionaries}) can also be automated, using automatic differentiation.
The procedure \textproc{auto-involutive-mcmc} in Algorithm~\ref{alg:auto-involutive-mcmc} combines these two ideas and automates involutive MCMC given the programs $\mathcal{P}$, $\mathcal{Q}$, and $\mathcal{F}$.


\subsubsection{A Differentiable Programming Language for Manipulating Traces}
Recall that for involutive MCMC on a state space of dictionaries, we define $Z := \{x \oplus y : p(x) q_x(y) > 0\} \subseteq \mathcal{D}$, where $x \oplus y$ is the dictionary resulting from merging dictionaries $x$ and $y$ with disjoint keys.
The involution is a function $f : Z \to Z$.

We now introduce a simple differentiable programming language for specifying involutions $f$.
The language needs to have syntax for reading the value from an address in $x \oplus y \in Z$ and writing to an address in $x' \oplus y' \in Z$.
To read a value $(x \oplus y)[a]$ at address $a$, we use the \texttt{@read} keyword:
\noindent
\begin{center}
\begin{tabular}{c}
{\begin{lstlisting}[basicstyle=\small\ttfamily]
    value = @read(<address>, <type>)
\end{lstlisting}}
\end{tabular}
\end{center}
The first argument is the address $a$ and the second argument is either $\mathtt{:}\mathtt{discrete}$ or $\mathtt{:}\mathtt{continuous}$, and informs the interpreter whether the random choice at that address is drawn from a discrete or continuous distribution (this information will be used to support automatic differentiation).
Recall that $x$ is the trace of the model probabilistic program, and $y$ is the trace of the auxiliary probabilistic program.
Each address $a$ therefore needs to specify which of these traces to read from, and the address within that trace.
The traces are given names in the function signature:
\noindent
\begin{center}
\begin{tabular}{c}
\begin{lrbox}{\mybox}%
\begin{lstlisting}[basicstyle=\small\ttfamily]
@transform f (model_in, aux_in) to (model_out, aux_out)
begin
..
end
\end{lstlisting}
\end{lrbox}%
\scalebox{0.94}{\usebox{\mybox}}
\end{tabular}
\end{center}
Here, the traces $x$, $y$, $x'$ and $y'$ are given names $\mathtt{model\_in}$, $\mathtt{aux\_in}$, $\mathtt{model\_out}$, and $\mathtt{aux\_out}$, respectively.
The syntax for address $\mathtt{a}$ within trace $\mathtt{trace}$ is $\mathtt{trace[a]}$.
For example, to read the value of a continuous address $\mathtt{:}\mathtt{a}$ from the input model trace ($x$):
\noindent
\begin{center}
\begin{tabular}{c}
{\begin{lstlisting}[basicstyle=\small\ttfamily]
val = @read(model_in[:a], :continuous)
\end{lstlisting}}
\end{tabular}
\end{center}
The syntax for writing to an address in $x' \oplus y' \in Z$ is similar.
For example to write a value $\mathtt{val}$ to $x'[\mathtt{:}\mathtt{a}]$:
\noindent
\begin{center}
\begin{tabular}{c}
{\begin{lstlisting}[basicstyle=\small\ttfamily]
@write(model_out[:a], val, :continuous)
\end{lstlisting}}
\end{tabular}
\end{center}
Note that the input traces $x, y$ are distinct from the output traces $x', y'$;
input traces can only be read from, and output traces can only be written to.
For example, it is not possible to write an output trace and then read the written value from it later.

Often, we want to simply copy the value from some address in the input traces to some address in the output trace.
While this is possible via a $\mathtt{@read}$ followed by a $\mathtt{@write}$, the language provides a special syntax:
\noindent
\begin{center}
\begin{tabular}{c}
{\begin{lstlisting}[basicstyle=\small\ttfamily]
@copy(<source-address>, <destination-address>)
\end{lstlisting}}
\end{tabular}
\end{center}
For example, to copy the value from address $\mathtt{:}\mathtt{u}$ in $x$ to address $\mathtt{:}\mathtt{v}$ in $x'$, we use:
\noindent
\begin{center}
\begin{tabular}{c}
{\begin{lstlisting}[basicstyle=\small\ttfamily]
@copy(model_in[:u], model_out[:v])
\end{lstlisting}}
\end{tabular}
\end{center}
Of course, it is also possible to copy from $x$ to $y'$, from $y$ to $x'$ and from $y$ to $y'$.
As we will see in Section~\ref{sec:sparsity}, it is preferable to use $\mathtt{@copy}$ when possible instead of reading and then writing, as this can make the acceptance probability calculation more efficient.

\paragraph{Constructing an Involution}
Consider the following generative model, which posits that a vector of univariate data is either generated from a single normal distribution or a mixture of two normal distributions.
The model is expressed as a probabilistic program:
\noindent
\begin{center}
\begin{tabular}{c}
\begin{lrbox}{\mybox}%
\begin{lstlisting}[basicstyle=\small\ttfamily]
@gen function p()
  k ~ uniform_discrete(1, 2)
  means = [{(:mu, j)} ~ normal(0, 10) for j in 1:k]
  weights = ones(k)/k
  vars = ones(k)
  for i in 1:100
    {(:x, i)} ~ mixture_of_normals(weights, means, vars)
  end
end
\end{lstlisting}
\end{lrbox}%
\scalebox{0.90}{\usebox{\mybox}}
\end{tabular}
\end{center}
This is a simplified version of the infinite Gaussian mixture model in Figure~\ref{fig:mixture}b.
Each key of the form $(\mathtt{:}\mathtt{x}, i)$ will be observed (in the dictionary $b$), so the latent part of the trace (the dictionary $x$) contains the other keys $\mathtt{:}\mathtt{k}$ and $(\mathtt{:}\mathtt{mu}, 1)$ and (when $\mathtt{k} = 2$) $(\mathtt{:}\mathtt{mu}, 2)$.

We will now walk through how to express a simple `split-merge' move~\citep{richardson1997bayesian} using our differentiable programming language.
Suppose we want to construct an involutive MCMC kernel that changes $\mathtt{k}$ from $2$ to $1$ or vice versa.
The discrete part of the involution is straightforward:
\noindent
\begin{center}
\begin{tabular}{c}
\begin{lrbox}{\mybox}%
\begin{lstlisting}[basicstyle=\small\ttfamily]
k = @read(model_in["k"], discrete)
if k == 1
    @write(model_out["k"], 2, discrete)
else
    @write(model_out["k"], 1, discrete)
end
\end{lstlisting}
\end{lrbox}%
\scalebox{1.0}{\usebox{\mybox}}
\end{tabular}
\end{center}
Writing the involution code for the continuous choices is more complex.
There is no one-to-one correspondence between the space of dictionaries with keys $\{\mathtt{:}\mathtt{k}, (\mathtt{:}\mathtt{mu}, 1)\}$ and the space of dictionaries $x$ with keys $\{\mathtt{:}\mathtt{k}, (\mathtt{:}\mathtt{mu}, 1), (\mathtt{:}\mathtt{mu}, 2)\}$.
Therefore, we need to extend the state space using the auxiliary distribution, defined with the following probabilistic program:
\noindent
\begin{center}
\begin{tabular}{c}
\begin{lrbox}{\mybox}%
\begin{lstlisting}[basicstyle=\small\ttfamily]
@gen function q(model_trace)
    if model_trace["k"] == 1
        # we are doing a split, sample extra DoF
        u ~ beta(2, 2)
    end
end
\end{lstlisting}
\end{lrbox}%
\scalebox{1.00}{\usebox{\mybox}}
\end{tabular}
\end{center}
Our involution is on the space of combined dictionaries $x \oplus y$ where $y$ are traces of $\mathtt{q}$.
Note that for each $(x \oplus y)$ where $x[\mathtt{k}] = 1$, $y$ has a key $\mathtt{u}$, and for each $(x \oplus y)$ where $x[\mathtt{k}] = 2$, $y$ is empty.
We extend the involution using a pair of bijections between $(\mu_1, \mu_2)$ and $(\mu_1, u)$ that show how two cluster means should be transformed into one cluster mean (merge), and vice versa (split):
\begin{equation} \label{eq:split-merge}
\begin{array}{c}
\mu_1, \mu_2 \mapsto ((\mu_1 + \mu_2) / 2, \mu_2 - (\mu_1 + \mu_2)/2) \;\; [\mathrm{Merge}]\\
\mu_1, u \mapsto (\mu_1 - u, \mu_1 + u) \;\; [\mathrm{Split}]
\end{array}
\end{equation}
The value of $x[\mathtt{k}]$ determines which of these functions is executed.
The full involution program $\mathcal{F}$ is then:
\noindent
\begin{center}
\begin{tabular}{c}
\begin{lrbox}{\mybox}%
\begin{lstlisting}[basicstyle=\small\ttfamily]
@transform f (model_in,aux_in) to (model_out,aux_out)
begin
    k = @read(model_in["k"], discrete)
    if k == 1
        @write(model_out["k"], 2, discrete)

        # split bijection
        mu = @read(model_in[("mu", 1)], continuous)
        u = @read(aux_in["u"], continuous)
        mu1 = mu - u; mu2 = mu + u
        @write(model_out[("mu", 1)], mu1, continuous)
        @write(model_out[("mu", 2)], mu2, continuous)
    else
        @write(model_out["k"], 1, discrete)

        # merge bijection
        mu1 = @read(model_in[("mu", 1)], continuous)
        mu2 = @read(model_in[("mu", 2)], continuous)
        mu = (mu2 + mu1) / 2
        u = mu2 - mu
        @write(model_out[("mu", 1)], mu, continuous)
        @write(aux_out["u"], u, continuous)
    end
end
\end{lstlisting}
\end{lrbox}%
\scalebox{0.90}{\usebox{\mybox}}
\end{tabular}
\end{center}
The two continuous bijections (implemented in blocks of code in the two branches) are inverses of one another.
This is a common pattern in involution programs---an involution on the discrete parts of the traces (in this case, just $x[\mathtt{k}]$) determines via control flow which continuous code blocks get executed, such that the end-to-end program defines an involution.

Note that the observations $b$ (in this case, the x-coordinates) are included in traces of $\mathtt{p}$, which take the form $x \oplus b$ where $x$ is the latent part and $b$ is the observed part.
The involution program $\mathcal{F}$ is allowed to read from the the observations in the model trace using the same syntax as used to read from the latent part of the trace $x$.
(The example given here does not utilize this feature).

\subsubsection{Computing the Jacobian with Automatic Differentiation}

Recall that the involution $f$ must decompose into (i) an involution $g$ on elements $e \in E$ of a partition of the state space, and (ii) a pair of continuous differentiable bijections $h_{e}$ and $h_{g(e)} = h_{e}^{-1}$ between each pair of corresponding elements of the partition.
Each function $h_e$ is a function from the values at continuous addresses in the input trace ($z[k]$ for $k \in K_z \cap \mathcal{J}$) to the values at continuous addresses in the output trace ($z'[k]$ for $k \in K_{z'} \cap \mathcal{J}$).
In the example above, the partition $E$ is given by the equivalence classes of the relation $(x_1, y_1) \sim (x_2, y_2) \iff x_1[\mathtt{k}] = x_2[\mathtt{k}]$ (there are two equivalence classes), $g$ maps the $k = 1$ class to the $k = 2$ class and vice versa, and 
the continuous bijections $h_e$ and $h_{g(e)}$ are the functions in Equation~(\ref{eq:split-merge}).

We compute the Jacobian using automatic differentiation. (The Gen implementation uses forward-mode AD whereas our PyTorch implementation uses reverse-mode AD.)
The procedure \textproc{run-involution} in Algorithm~\ref{alg:auto-involutive-mcmc} shows the implementation of the interpreter for the language that uses reverse-mode AD.
The interpreter executes $\mathcal{F}$ using regular Julia or Python semantics, but intercepts calls to \texttt{@write}, \texttt{@read}, and \texttt{@copy} statements, and in addition to performing the desired operation, records the set of continuous addresses that are read, written, and copied.
After $\mathcal{F}$ is finished executing, AD uses the recorded addresses to compute the Jacobian $|J h_e|(z)$; in the case of reverse-mode AD, this is accomplished by iterating over output continuous addresses and backpropagating from each one to all of the input continuous addresses, computing the Jacobian column-by-column.
Keys with an $n$-dimensional Lebesgue measure as their reference measure, corresponding to vector-valued random choices, are unpacked into $n$ separate columns in the Jacobian; this detail is elided in Algorithm~\ref{alg:auto-involutive-mcmc}.
The $\mathtt{discrete}$ and $\mathtt{continuous}$ labels are also omitted from the syntax in \textproc{run-involution} to simplify notation ($a \in \mathcal{I}$ indicates a discrete choice and otherwise a choice is continuous).

\section{EXPLOITING SPARSITY FOR IMPROVED PERFORMANCE} \label{sec:sparsity}

Suppose $N$ and $N'$ are the total number of random choices in the input traces $x, y$ and output traces $x', y'$ respectively
(note that the observations $b$ are excluded---$x$ and $x'$ are the \emph{latent} part of the model's traces only).
Let $M$ be the number of continuous random choices among these (which must be the same in the input traces and output traces).
Then, the number of operations used in Algorithm~\ref{alg:auto-involutive-mcmc} grows as $O(N + N') + O(M^3)$.
The linear term is due to sampling $y'$ and computing the four log-densities required for the acceptance probability.
The cubic term is due to computing the Jacobian determinant, which is also required for the acceptance probability (Equation~\ref{eq:acceptance-ratio-dictionaries}).

It is possible to reduce the number of operations performed in an automated involutive MCMC kernel by exploiting special structure in the involution $f$.
In some cases, this structure can lead to $O(1)$ operations per kernel application (i.e. constant in $N$, $N'$, and $M$).
This section describes techniques for exploiting involution structure within an automated involutive MCMC implementation.
These techniques are used in the Gen implementation, and one of the techniques is used in our minimal PyTorch implementation.

\subsection{Sparsity-Aware Automatic Jacobian Computation}

The naive implementation of Algorithm~\ref{alg:auto-involutive-mcmc} computes by the Jacobian by first computing the $M$-by-$M$ Jacobian matrix $J h_{e(z)}$ via automatic differentiation, and then computing the absolute value of its determinant.
However, we observe that in many applications of involutive MCMC, the values at continuous choices in the input traces are directly copied into the output traces (either at the same key or a different key).
These copy operations result in columns in the Jacobian matrix that have a single $1$ entry with remaining entries equal to $0$.
For example, for the function $(u, v, x, y) \mapsto (u, 2 u - v, y, x) = (u', v', x', y')$, the Jacobian is (with columns corresponding to $u'$, $v'$, $x'$, and $y'$ and rows corresponding to $u$, $v$, $x$, and $y$):
\begin{equation*}
\left[
\begin{array}{cccc}
1 & 2 & 0 & 0\\
0 & -1 & 0 & 0\\
0 & 0 & 0 & 1\\
0 & 0 & 1 & 0
\end{array}
\right]
\begin{array}{c}
u\\ v\\ x\\ y
\end{array}
\end{equation*}
Using the cofactor expansion of the determinant, we observe that for any `copy' column in an $M$-by-$M$ Jacobian matrix (a column with a single $1$ and all other entries $0$), the absolute value of the determinant is equivalent to that of the $(M-1)$-by-$(M-1)$ sub-matrix with the corresponding column \textit{and} row omitted (even if that would remove other nonzero entries from the matrix).
By applying this rule recursively, we can instead compute the determinant of a much smaller matrix; for the example above with $M=4$, the absolute value of the determinant simplifies to the absolute value of a single entry ($|-1|$).
Indeed, if some input key is copied to some output key, then we can entirely avoid computing its row (and corresponding column) of the Jacobian.
Therefore, the number of operations (which is dominated by the determinant) reduces from $M^3$ to $(M-|C|)^3$ where $|C|$ is the number of input keys that were copied to some output key.
The $\mathtt{@copy}$ statement in our differentiable programming language makes the set of input keys that were copied explicitly available to the interpreter, which makes automating this optimization straightforward, as shown in \textproc{run-involution} in Algorithm~\ref{alg:auto-involutive-mcmc}.

Many involutive MCMC kernels modify only a constant number of keys that does not depend on the sizes $N$ and $N'$ of the input and output traces or the total number of continous keys $M$; the other keys are copied over unchanged.
For example, consider the Jacobian for a split-merge reversible jump move~\citet{richardson1997bayesian}, which is implemented in Figure~\ref{fig:mixture}:
\includegraphics[width=\linewidth]{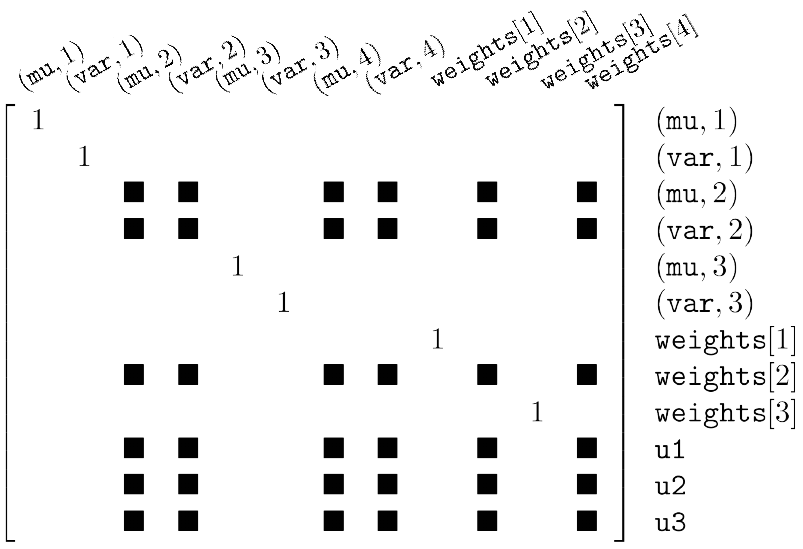}
Here, the move is splitting cluster 2 into two clusters (cluster 2 and cluster 4).
Black squares indicate nonzero entries.
In this case, the size $N$ of the latent part of the trace grows linearly in the number of clusters, but the Jacobian determinant can be calculated from only the $6$-by-$6$ submatrix of the Jacobian that involves the parameters of the one cluster being split (or the two clusters being merged).
Therefore, the acceptance ratio computation reduces from $O(N + N') + O(M^3)$ to $O(N + N')$.
While a reasonable hand-coded implementation of this algorithm would likely perform this sort of optimization as well, Algorithm~\ref{alg:auto-involutive-mcmc} automates it.

\subsection{Incremental Computation of Output Traces and Density Ratios}

Consider input traces $(x, y)$ and output traces $(x', y')$ with $f(x, y) = (x', y')$ for involution $f$ encoded by program $\mathcal{F}$ (again, $x$ and $x'$ include only the \emph{latent} part of the model trace and not the observed part $b$).
Let $N_1 := |K_{x}|$, $N_2 := |K_{y}|$, $N_1' := |K_{x'}|$ and $N_2' := |K_{y'}|$, and
$N := N_1 + N_2$ and $N' := N_1' + N_2'$.
Suppose that $N_1 + N_1' \gg N_2 + N_2'$, which often occurs when the the number of latent variables is large, but the involutive MCMC move only updates a portion of the latent variables.
Algorithm~\ref{alg:auto-involutive-mcmc} runs $\mathcal{F}$, which explicitly writes the value for each element of $(x', y')$, requiring $N' = N_1' + N_2' \approx N_1'$ `write' or `copy' operations.
The algorithm also uses approximately $N_1 + N_1'$ operations to evaluate the log-densities, because it accumulates the log-density of each random choice in $x$, and $x'$.

We now show how to modify the program $\mathcal{F}$ and its interpreter, so that both computing $z' = (x', y')$ from $z = (x, y)$, and computing the log density ratio $\log(\tilde{p}(x' \oplus b) / \tilde{p}(x \oplus b))$, can take $O(1)$ operations (that is, constant in $N_1 + N_1'$) when the involution $f$ has sparse structure and the model density $p$ has conditional independencies.

Note that $x$ and $x'$ may both have values for certain keys, and $x$ may contains keys not present in $x'$, and vice versa.
Often, an involution does not modify the values of many keys in the trace---this is the case in the split-merge reversible jump move described earlier.
Suppose that $A \subseteq K_{x} \cap K_{x'}$ is the set of keys $k$ in both $x$ and $x'$ for which $x'[k] = x[k]$.
Let $\Delta := K_{x'} \setminus A$ be the set of keys in $x'$ that are either (i) not in $x$, or (ii) in $x$ but have their value changed.
Then, the previous trace $x$ and a dictionary $\delta$ with $K_{\delta} = \Delta$ and $\delta[k] = x'[k]$ and the density function $p$ suffice to uniquely define the trace $x'$, provided the density satisfies the following condition, which is satisfied by densities defined by probabilistic programs\footnote{Intuitively, two traces of a probabilistic program cannot have different sets of addresses unless they share an address and disagree on its value.}:
$p(x) > 0$ and $p(x') > 0$ and $x \ne x'$ implies there exists $k \in K_x \cap K_{x'}$ with $x[k] \ne x'[k]$.
Let $x|_A$ denote the \emph{restriction} of a dictionary $x$ to only keys in $A$.
Then, \textproc{naive-trace-update} (Algorithm~\ref{alg:naive-update}) computes the new trace $x'$ and log-densities ratio $\log (p(x') / p(x))$ of a probabilistic program from a previous trace $x$ and a dictionary $\delta$ (and observations $b$), provided there exists $x' = \delta \oplus x|_A$ for some $A$ where $p(x') > 0$.
\begin{algorithm}[h]
\begin{algorithmic}
\Require probabilistic program $\mathcal{P}$; previous trace $x$; observations $b$;\\
dictionary $\delta$ such that there exists $x' = (\delta \oplus x|_{A})$ for some $A$ where $p(x') > 0$.
\vspace{1mm}
\Procedure{naive-trace-update}{$\mathcal{P}$, $x$, $b$, $\delta$}
   \State {\bf assert} $K_{x} \cap K_{b} = \varnothing$
   \State $x' \gets \{\}$
   \State $s' \gets 0$
   \execute[$\mathcal{P}$], but with $\mathtt{"}k \sim \mathrm{distribution}\mathtt{"} \equiv$ (
        \State 1. if $k \in K_{\delta}$ then $v \gets \delta[k]$ else $v \gets (x \oplus b)[k]$
        \State 2. if $k \not \in K_{b}$ then set $x'[k] \gets v$
        \State 3. set $s' \gets s' + \textproc{logpdf}(\mathrm{distribution}, v)$
        \State 4. evaluate to $v$)
    \END
   \State $s \gets \textproc{score}(\mathcal{P}, x \oplus b)$
   \State \Return $(x', s' - s)$
\EndProcedure
\end{algorithmic}
\caption{Naive Trace Update}
\label{alg:naive-update}
\end{algorithm}

\textproc{naive-trace-update} is a naive implementation of a general \emph{trace-update} operation that is part of Gen's API~\citep{cusumano2019gen}.
While the naive implementation requires $O(N_1 + N_1')$ operations, it is possible for more sophisticated implementations of \emph{trace-update} to run in approximately $O(|K_{\delta}|)$ operations for certain probabilistic programs $\mathcal{P}$ by exploiting conditional independence structure program that leads to cancellations in the density ratio $\tilde{p}(x' \oplus b) / \tilde{p}(x \oplus b)$.
The details of these more efficient implementations are outside the scope of this paper, but are implemented as part of the Gen system.

Given an implementation of \emph{trace-update} (that has the same input and output signature of \textproc{naive-trace-update}) we can optimize Algorithm~\ref{alg:auto-involutive-mcmc} by
modifying the requirement for the involution program $\mathcal{F}$:
Instead of explicitly specifying a value for all keys in $x'$, it need only explicitly specify values in $\delta$.
(We call the other values in $x'$ \emph{implicit copies}---they could be explicitly copied within $\mathcal{F}$, but this would result in uncessary code and possibly unecessary computation.)
Therefore, the involution program can run in $O(|K_{\delta}| + N_2 + N_2')$ operations.
We remove the separate calls to \textproc{score} for the log-densities $\log \tilde{p}(x \oplus b)$ and $\log \tilde{p}(x' \oplus b)$ and replace them with a single call to \emph{trace-update} on $\mathcal{P}$ that takes in $x$ and $\delta$ and returns $x'$ and $\log (\tilde{p}(x' \oplus b) / \tilde{p}(x \oplus b))$.
When $|K_{\delta}|$ and $N_2$ and $N_2'$ are all constant in the size $N_1$ of the latent part of model trace ($x$), the resulting algorithm uses a number of operations that does not grow with $N_1 + N_1'$.

\section{DYNAMICALLY DETECTING BUGS IN INVOLUTIONS}

The automated technique described in the previous sections allows users to implement involutive MCMC algorithms by writing probabilistic programs for $p$ and $q$ and a differentiable program for the involution $f$, avoiding the need to derive and implement the accept/reject formula by hand. 
This eliminates certain classes of errors that could otherwise be difficult to root out in hand-coded implementations of MCMC algorithms. However, it is still of course possible to implement $p$, $q$, or $f$ incorrectly, introducing bugs that may invalidate the correctness proofs for the involutive MCMC algorithm.

Fortunately, the factorization of diverse algorithms into a common template, involving an explicitly represented model, auxiliary distribution, and involution, enables simple and automated debugging checks that can catch a wide variety of such errors dynamically. These checks are not complete, but anecdotally, we have found that they often alert users to subtle bugs in the design or implementation of an algorithm. The checks are summarized in Algorithm~\ref{alg:dynamic-check}.

\begin{algorithm}[h]
\begin{algorithmic}
\Procedure{involutive-mcmc-check}{$\mathcal{P}$, $\mathcal{Q}$, $\mathcal{F}$}
    \State \vspace{-2mm}
    \LineComment{Generate a randomized test case}
    \State $((x \oplus b), \_) \sim \textproc{trace-and-score}(\mathcal{P})$ 
    \State $(y, \_) \sim \textproc{trace-and-score}(\mathcal{Q}_x)$
    \State \vspace{-2mm}
    \LineComment{Run involution and perform dimension check}
    \State $x' \oplus y' \gets \textproc{run-involution}(\mathcal{F}, x \oplus y, b)$
    \State \vspace{-2mm}
    \LineComment{Support check}
    \State $\log p(x') \gets \textproc{score}(\mathcal{P}, x')$
    \State $\log q_{x'}(y') \gets \textproc{score}(\mathcal{Q}_{x'}, y')$
    \State {\bf assert} $(\log p(x') > -\infty) \land (\log q_{x'}(y') > -\infty)$
    \State \vspace{-2mm}
    \LineComment{Involution check}
    \State $\tilde{x} \oplus \tilde{y} \gets \textproc{run-involution}(\mathcal{F}, x' \oplus y', b)$
    \State ${\bf assert} (\tilde{x} = x) \land (\tilde{y} = y)$
\EndProcedure
\end{algorithmic}
\caption{Dynamic Check for Bugs}
\label{alg:dynamic-check}
\end{algorithm}

The check first randomly samples a model trace $x$ and simulated observations $b$ from the model prior $p$, and auxiliary trace $y$ from the $q_x$. It then runs three tests on this simulated test case:

\begin{enumerate}
\item \textbf{Support check.} The \textit{support check} runs the involution $\mathcal{F}$ on $(x, y)$ and checks that the resulting pair of traces, $(x', y')$, are within the set $Z$ of positive-density elements for $\pi$.

\item \textbf{Dimension check.} The \textit{dimension check} runs the involution $\mathcal{F}$ on $(x, y)$ and records the sets of addresses $R$ and $W$ at which $\mathcal{F}$ performs continuous reads and writes, respectively, and checks that the sizes of these sets match.
(This check is part of computing the determinant of the Jacobian in \textproc{run-involution}).

\item \textbf{Involution check.} The \textit{involution check} runs $\mathcal{F}$ twice to compute $f(f(x, y))$ and checks that the result is equal to $(x, y)$. If $f$ is an involution on $Z$, then this check will always succeed; if there is a set $\tilde{Z}$ of positive $\pi$-measure on which $f(f(z)) \neq z$ (i.e., $f$ is not an involution), then this check has positive probability of failing.
\end{enumerate}

These checks each catch qualitatively different bugs in user programs. We now give several examples.

\textit{Incorrectly transformed continuous variables.} Many bugs in the design or implementation of deterministic transformations of continuous variables are naturally detected by the involution check. Consider, for example, the Hamiltonian Monte Carlo algorithm~\citep{hmc}, which, as \citet{imcmc} observe, is an instance of the involutive MCMC framework. 
When applied to HMC, the dynamic bug check involves:
(a) sampling a model state $x$ from the prior; (b) sampling a momentum $y$; (c) running the leapfrog integrator forward to get a new state $x'$ and new momentum $-y'$, (d) running the leapfrog integrator backward from state $x'$ with momentum $y'$, and (e) checking that this results in the state-momentum pair $x, -y$. If the momentum is not properly negated, or if the leapfrog integrator is incorrectly implemented, this check can fail.

As another, simpler example of this type of error, consider the \texttt{split\_params} and \texttt{merge\_params} functions invoked in Figure~\ref{fig:mixture}, as part of a reversible-jump MCMC kernel for inferring a mixture of Gaussians with an unknown number of components. Considering only the $\mu$ parameter, suppose the \textit{split} move samples an auxiliary variable $u_1$ and computes $\mu_1 = \mu + u_1, \mu_2 = \mu - u_1$ as the means of the two new clusters. If the \textit{merge} move joins two clusters and assigns $\mu = \sqrt{\mu_1 \mu_2}$, there is a mismatch: a split move cannot be reversed by a corresponding merge, because in general, $\sqrt{(\mu+u_1)(\mu-u_1)} \neq \mu$. This error could be fixed either by changing the split to compute $\mu_1 = \mu u_1, \mu_2 = \frac{\mu}{u_1}$, or by changing the merge to compute $\mu = \frac{\mu_1 + \mu_2}{2}$.

%


\textit{Discrete logic errors.} Another common class of errors is 
for the discrete logic of an involution to be flawed. Consider the following incorrect 
implementation of a \textit{birth-death} move for the mixture model in Figure~\ref{fig:mixture}, which either adds a new mixture component or selects an existing one
at random to delete
(we name the traces $\mathtt{tr1}$, $\mathtt{tr2}$, $\mathtt{tr3}$, ad $\mathtt{tr4}$ for reasons of space):

\begin{center}
\begin{tabular}{c}
{\begin{lstlisting}[basicstyle=\small\ttfamily]
@gen function q(trace)
  current_k = trace[:k]
  is_birth ~ bernoulli(current_k == 1 ? 1.0 : 0.5)
  if is_birth
    new_mu ~ normal(0, 10)
    new_var ~ inv_gamma(1, 10)
  else
    deletion_idx ~ uniform_discrete(1, current_k)
  end
end
\end{lstlisting}}
\end{tabular}
\end{center}

\begin{center}
\begin{tabular}{c}
{\begin{lstlisting}[basicstyle=\small\ttfamily]
@transform h (tr1, tr2) to (tr3, tr4) begin
   is_birth = @read(tr2, :is_birth, :discrete)
   @write(tr4[:is_birth], !is_birth, :discrete)
   k = @read(tr1[:k], :discrete)
   weights = @read(tr1[:weights], :continuous)
   if is_birth
     @write(tr3[:k], k+1, :discrete)
     new_mu = @read(tr2[:new_mu], :continuous)
     new_var = @read(tr2[:new_var], :continuous)
     @write(tr3[(:mu, k+1)], new_mu, :continuous)
     @write(tr3[(:var, k+1)], new_var, :continuous)
     new_weights = add_weight(weights)
     @write(tr4[:deletion_idx], k+1, :discrete)
   else
     idx = @read(tr2[:deletion_idx], :discrete)
     @copy(tr1[(:mu, idx)], tr4[:new_mu])
     @copy(tr1[(:var, idx)], tr4[:new_var])
     for i in (idx+1):k
       @copy(tr1[(:mu, i)], tr3[(:mu, i-1)])
       @copy(tr1[(:var, i)], tr3[(:var, i-1)])
     end
     @write(tr3[:k], k-1, :discrete)
     new_weights = delete_weight(weights, idx)
   end
   @write(tr3[:weights], new_weights, :continuous)
end
\end{lstlisting}}
\end{tabular}
\end{center}

The flaw in this implementation is that although the death move can 
delete any of the $k$ mixture components, the birth move can only add
a new component to the \textit{end} (index $k+1$), so the move is not
reversible. The involution check will discover that the deletion of a
component with index $i < k$ is not reversed by a corresponding birth move,
and will thus raise an error.

\textit{Other miscellaneous errors.} When implementing distributions as probabilistic programs, it is also possible for users to make more mundane errors, such as spelling the name of a random choice inconsistently, or characterizing random choices using the wrong \texttt{type} tags (\texttt{:continuous} and \texttt{:discrete}). Such errors can be difficult to detect statically, because the addresses at which a probabilistic program makes random choices, and the distributions of those choices, may change from sample to sample. (Previous work has explored static analyses based on types~\citep{popltraces} and abstract interpretation~\citep{verified_svi}, but these each work on limited subsets of the programs that Gen's full modeling language permits, and it is often precisely these more complex programs that require the flexibility of the involutive MCMC framework in the first place). Our dynamic support and dimension checks can help to detect bugs like these. For example, if the involution $f$ writes to a misspelled address, the support check will determine that the resulting trace's density is 0.


\textbf{Dynamic checks during inference.} These dynamic assertions can also be run during inference, at each application of the transition kernel. This can be useful to catch bugs that only occur in regions of the state space with low prior mass (but perhaps high posterior mass). When enabled, we can run dynamic checks after each application of the kernel, and when they fail, write to a debugging log and reject the proposed new state. As it turns out, the kernel induced by this procedure is still stationary for $p$:

\begin{lemma}
Let $p$ and $q_x$ be model and auxiliary densities as above, but suppose $f : \mathcal{D} \times \mathcal{D} \rightarrow \mathcal{D} \times \mathcal{D}$ may not be an involution on $Z$.  Trace-based involutive MCMC with dynamic checks enabled, rejecting whenever such a check fails, still yields a kernel that is stationary for $p$.
\end{lemma}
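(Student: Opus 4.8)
The plan is to reduce the checked kernel to an ordinary, provably correct involutive MCMC kernel and then invoke the stationarity theorem for general involutive MCMC proved earlier. Let $A \subseteq Z$ denote the set of joint states $z = x \oplus y$ on which all three checks of Algorithm~\ref{alg:dynamic-check} succeed, and define $\hat{f} : Z \to Z$ by $\hat{f}(z) = f(z)$ for $z \in A$ and $\hat{f}(z) = z$ otherwise. The checked algorithm rejects (stays at $x$) exactly when $z \notin A$, and otherwise performs the standard accept/reject step of Algorithm~\ref{alg:involutive-mcmc} using the Jacobian factor of Equation~(\ref{eq:acceptance-ratio-dictionaries}). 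Since a rejection is a self-loop $x' = x$, and since on the identity branch the acceptance ratio is exactly $1$ (density ratio and Jacobian both equal $1$, leaving $x$ unchanged), this is precisely the involutive MCMC kernel built from $\hat{f}$. It therefore suffices to show that $\hat{f}$ is a genuine involution on $Z$ carrying the partition-and-bijection structure of Section~\ref{sec:dists-on-dicts}, so that the general stationarity theorem applies verbatim and yields invariance of $p$.

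The core step is to prove that $A$ is closed under $f$ and that $f|_A$ is an involution. For $z \in A$ write $z' = f(z)$. The involution check gives $f(f(z)) = z$, hence $f(z') = z$, and the support check gives $z' \in Z$. These two facts immediately propagate the support and involution checks to $z'$: $f(z') = z \in Z$, and $f(f(z')) = f(z) = z'$. Consequently $\hat{f}(\hat{f}(z)) = \hat{f}(z') = f(z') = z$, and off $A$ the map is the identity, so $\hat{f}$ will be a bona fide involution once the dimension check is also shown to hold at $z'$.

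The dimension check is the main obstacle, and I would dispatch it by observing that the dimension and involution checks together pin down the continuous dimension along each orbit. On any region where \textsc{run-involution} exercises a fixed set of continuous reads and writes, failing to read some continuous coordinate of the input forces $f$ to map two inputs differing only in that coordinate to the same output, making the reconstruction $f(f(z)) = z$ impossible; hence on $A$ the involution reads \emph{every} continuous coordinate, so $|R|$ equals the continuous dimension of $z$, and as the base algorithm writes every continuous output coordinate, $|W|$ equals the continuous dimension of $z'$. The dimension check $|R| = |W|$ at $z$ thus gives $\dim_c(z) = \dim_c(z')$. Since the involution check also holds at $z'$ (above), the same read-exhaustion argument applied at $z'$ gives $|R'| = \dim_c(z')$ and $|W'| = \dim_c(z)$, whence $|R'| = |W'|$ and the dimension check succeeds at $z'$, completing the closure $z \in A \Rightarrow z' \in A$. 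With equal continuous dimensions on each orbit, $f|_A$ splits into a discrete involution on the (key-set, discrete-value) configurations — these are determined solely by the discrete reads, so each Euclidean piece $Z_e$ maps into a single piece $Z_{g(e)}$ — together with the $C^1$ bijections $h_e$ extracted by the differentiable program, exactly the structure underlying Equation~(\ref{eq:acceptance-ratio-dictionaries}). This exhibits $\hat{f}$ as a dictionary involution with the required Radon–Nikodym derivative, so the stationarity theorem gives the result; configurations where the dimensions mismatch simply trigger a rejection, which leaves $p$ invariant.
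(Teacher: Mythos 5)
Your overall strategy is the same as the paper's: replace $f$ by a map that equals $f$ wherever the dynamic checks pass and the identity elsewhere, argue that this patched map is a genuine involution, and then observe that the checked kernel is exactly ordinary involutive MCMC run with the patched map. The paper's proof takes the success set to be $R = \{z \in Z \mid f(f(z)) = z \wedge \pi(f(z)) > 0\}$ --- \emph{only} the involution and support conditions --- so closure of $R$ under $f$ is immediate (it is precisely your first propagation step: $f(f(f(z))) = f(z)$ and $\pi(f(f(z))) = \pi(z) > 0$), and the proof is three lines. By folding the dimension check into your success set $A$, you created an extra obligation --- that the dimension check propagates from $z$ to $f(z)$ --- and the argument you give for that step is where the proof breaks.

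The read-exhaustion claim is false as a pointwise statement. If the program ignores a continuous input coordinate on some control-flow region, then $f \circ f$ cannot be the identity \emph{on that whole region}, but it can perfectly well equal the identity at particular points, and the dynamic checks are evaluated only at the sampled point. Concretely: take pieces $Z_e \cong \mathbb{R}^2$ (coordinates $u, v$) and $Z_{e'} \cong \mathbb{R}$ (coordinate $w$), with the program mapping $(u,v) \mapsto w := u$ (one continuous read, one write) and $w \mapsto (u,v) := (w, 0)$ (one read, two writes). At every point $(u_0, 0)$ all three checks pass, with $|R| = |W| = 1$ even though the continuous dimension of the input is $2$; yet the image $\{w \mapsto u_0\}$ fails the dimension check. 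So your $A$ is not closed under $f$, $\hat{f}$ is not an involution ($\hat{f}(\hat{f}((u_0,0))) = \{w \mapsto u_0\}$), and your inference $\dim_c(z) = \dim_c(z')$ from $|R| = |W|$ is also wrong at such points. The correct form of your observation is measure-theoretic: within any control path that ignores a continuous coordinate, the involution check can pass only on a $\pi$-null set, so the bad points form a null set and the lemma can be rescued by arguing that two kernels agreeing except on a $\pi$-null set of initial states have the same stationary distributions --- but you never make that almost-everywhere argument; your proof asserts pointwise closure, which is false. (To be fair, the paper's own proof also elides the dimension check --- it tacitly treats ``some check fails'' as equivalent to ``involution or support fails'' --- so your instinct that this check needs separate treatment points at a real subtlety; but the treatment you give does not work, whereas defining the success set without the dimension check at least makes the closure and involution claims correct.)
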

\begin{proof}
Let $R = \{x \in Z \mid f(f(x)) = x \wedge \pi(f(x)) > 0\}$, and let $h(x) := \mathbf{1}[x \in R]f(x) + \mathbf{1}[x \not\in R]x$. Then $h$ is an involution on $Z$, and trace-based involutive MCMC with $p$, $q_x$, and $h$ yields a stationary kernel. But this kernel is the same one induced by using $f$ with dynamic checks. For $x$ on which dynamic checks succeed, $h$ is equivalent to $f$. For $x$ on which dynamic checks fail, $h$ is equivalent to the identity; thus, accepting a move produced by $h$ is equivalent to rejecting.  
\end{proof}

\section{EXAMPLES}

\subsection{Reversible Jump MCMC}
Reversible jump MCMC~\citep{green1995reversible,hastie2012model} is a special case of involutive MCMC, and 
the implementation of reversible jump MCMC kernels can be automated using the probabilistic and differentiable programming languages presented in this paper.
We now review reversible jump MCMC, then show how it can be automated using the techniques presented earlier, and give an example.

\paragraph{Review of reversible jump MCMC.}
The reversible jump MCMC framework involves a set of `models' $h \in \mathcal{H}$, and a prior distribution on models $p(h)$.
For each model, there is a latent continuous parameter vector $\theta_h \in \mathbb{R}^{n(h)}$ where $n(h)$ is the dimension of model $h$, and a likelihood function $L_{D,h}(\theta_h)$ for each $h$ given data $D$.
The latent state $x$ is a pair $(h, \theta_h)$ of model and continuous parameter.
There is a set of \emph{move types} $\mathcal{M}$.
Each move type $m \in \mathcal{M}$ is associated with an unordered pair of models $(h_1, h_2)$ and a dimensionality $d(m)$ such that $d(m) \ge n(h_1)$ and $d(m) \ge n(h_2)$ (zero, one, or more than one move types may be associated with a given pair of models).
For each latent state $x = (h, \theta_h)$, there is a probability distribution $q_{x}(m)$ on move types such that $q_{x}(m) > 0$ implies that $h$ is one of the models for move type $m$.
For each move type $m \in \mathcal{M}$ between $h_1$ and $h_2$ there is a pair of continuously differentiable bijections $g_{m, h_1 \to h_2} : \mathbb{R}^{d(m)} \to \mathbb{R}^{d(m)}$ and $g_{m, h_2 \to h_1} := g_{m, h_1 \to h_2}^{-1}$, and a pair of proposal densities $q_{m,h_1 \to h_2}(u_{h_1 \to h_2})$ and $q_{m,h_2 \to h_1}(u_{h_2 \to h_1})$ where $u_{h_1 \to h_2} \in \mathbb{R}^{d(m) - n(h_1)}$ and $u_{h_2 \to h_1} \in \mathbb{R}^{d(m) - n(h_2)}$.
A proposal is made from state $x = (h, \theta_h)$ by (i) sampling a move type $m \sim q_{x}(\cdot)$, and (ii) sampling continuous variable $u \sim q_{m,h \to h'}(\cdot)$ for $(h, h')$ associated with $m$, and (iii) computing $(\theta_{h}', u') := g_{m,h \to h'}(\theta_h, u)$, and proposing new state $x' = (h', \theta_h')$.

\paragraph{Encoding reversible jump in involutive MCMC.}
To encode reversible jump MCMC in our framework, we write a probabilistic program $\mathcal{P}$ that encodes the space of models $\mathcal{H}$, the prior distribution on models, $p(h)$, the per-model priors $p_h(\theta_h)$ and the per-model likelihoods $L_{D,h}(\theta_h)$.
The set of all models $h$ is encoded in the set of all pairs $(K_\mathcal{P}, \mathbf{d}_\mathcal{P})$ where $K_\mathcal{P}$ represent possible trace structures (i.e. control-flow paths through $\mathcal{P}$) and $\mathbf{d}_\mathcal{P}$ are the set of assignments to discrete random choices made by $\mathcal{P}$.
The per-model continuous parameters $\theta$ are encoded via continuous random choices $\mathbf{c}_\mathcal{P}$.
The auxiliary probabilistic program $\mathcal{Q}$ encodes both the probability distribution on moves types using discrete random choices and possibly stochastic control flow ($(K_\mathcal{Q}, \mathbf{d}_\mathcal{Q})$), and the per-move-type probability densities on $u$ using continuous random choices $\mathbf{c}_\mathcal{Q}$.
The involution $f$ factors into an (i) involution $f_1$ on pairs $i = ((K_\mathcal{P}, \mathbf{d}_\mathcal{P}), (H_\mathcal{Q}, \mathbf{d}_\mathcal{Q}))$ that defines the association between move types $\mathcal{M}$ and the model pairs ($h_1, h_2$); and (ii) a family of bijections $f_{1,i}$ on the space of pairs $(\mathbf{c}_\mathcal{P}, \mathbf{c}_\mathcal{Q})$ of continuous random choices for both programs for fixed values of the discrete random choices and fixed trace structure.

\paragraph{Example: Split-merge reversible jump.}
Figure~\ref{fig:mixture} shows a split-merge reversible jump kernel for an infinite Gaussian mixture model~\citep{richardson1997bayesian} implemented using the probabilistic and differentiable programming languages described in this paper.
Figure~\ref{fig:mixture}b shows the infinite Gaussian mixture model, specified as a probabilistic program $\mathtt{p}$.
The program takes the number of data points as input, then samples the number of clusters from a Poisson distribution, then samples cluster parameters and mixture proportions, and finally samples the data points from the resulting finite mixture.
Figure~\ref{fig:mixture}c shows the auxiliary probabilistic program $\mathtt{q}$ for the split-merge kernel.
This program takes a trace of the model program as input, and randomly decides whether to split a cluster and increase the number of clusters by one or merge two clusters and decrease the number of clusters by one.
Then, the program randomly picks which cluster to split, or which clusters to merge.
This kernel always merges the last cluster with a random other cluster; for ergodicity the move can be composed with a simple move (that has acceptance probability $1$) that swaps a random cluster with the last cluster.
If a split is chosen, then the program also samples the three degrees of freedom necessary to generate the new parameters for the clusters in an invertible manner.
Figure~\ref{fig:mixture}d shows a differentiable program specifying the involution for the split-merge kernel, and Figure~\ref{fig:mixture}f shows graphically how this involution acts on pairs of traces.
The yellow section (1) defines an involution $f_1$ on the discrete random choices that specifies that (i) the \texttt{split} choice should be flipped (so that split moves are always mapped to merge moves and vice versa) and that (ii) the number of clusters should be increased by one for a split move and decreased by one for a merge move, and (iii) which merged cluster corresponds to which split clusters.
The green section (2) specifies the continuous bijections that govern the transformation of continuous random choices during split moves and the purple section (3) specifies the inverses of these bijections, which govern the transformation of continuous choices during merge moves.

\subsection{State-Dependent Mixture Proposals}

\paragraph{Example: Bayesian structure learning for Gaussian processes}
Figure~\ref{fig:structure-learning} shows automated involutive MCMC being applied to fully Bayesian inference over the covariance function of a Gaussian process, where the prior on covariance functions (Figure~\ref{fig:structure-learning}a) is based on a probabilistic context-free grammar.
The inference algorithm is based on an involutive MCMC kernel that uses a state-dependent mixture of proposals.
A variant of this inference algorithm was previously studied in~\citep{schaechtle2016time,saad2019bayesian} based on a model of~\citet{grosse2012exploiting}.

\paragraph{Hierarchical address spaces.}
This example uses probabilistic programs that invoke other probabilistic programs, sometimes recursively.
For example, the model probabilistic program $\mathtt{p}$ invokes the probabilistic program $\mathtt{cov\_function\_prior}$, which is itself recursive.
Similarly, the auxiliary probabilistic program $\mathtt{q}$ invokes $\mathtt{walk\_tree}$ (which is recursive) as well as $\mathtt{cov\_function\_prior}$.
Consider the syntax used to recursively invoke $\mathtt{walk\_tree}$ within $\mathtt{walk\_tree}$:
\begin{center}
\begin{tabular}{c}
{\begin{lstlisting}[basicstyle=\small\ttfamily]
({:left} ~ walk_tree(node.left, path))
\end{lstlisting}}
\end{tabular}
\end{center}
This expression resembles a random choice expression.
However, instead of associating the return value of the function $\mathtt{walk\_tree}$ with the address $\mathtt{:}\mathtt{left}$, the address $\mathtt{:}\mathtt{left}$ is associated with the entire trace of random choices made within $\mathtt{walk\_tree}$.
That is, $\mathtt{:}\mathtt{left}$ is the \emph{namespace} for the addresses of all random choice made within the invocation.
Further invocations by the callee themselves result in nested namespaces.
This process results in a \emph{hierarchical address space} for random choices, as shown in Figure~\ref{fig:structure-learning}b.
This does not modify the mathematical formalism---each random choice made during the execution of a probabilistic program still has a unique address, but the address has multiple components that localize it within the hierarchy.
For example, the choice $\mathtt{done}$ a recursive call to $\mathtt{walk\_tree}$ might have address:
\[
k = (\mathtt{:}\mathtt{left} \mbox{ \texttt{=>} } \mathtt{:}\mathtt{right} \mbox{ \texttt{=>} } \mathtt{:}\mathtt{done})
\]
(`\texttt{=>}' is the Gen syntax for constructing hierarchical addresses).

\paragraph{A complex state-dependent distribution.}
At each iteration of the MCMC algorithm, the auxiliary probabilistic program $\mathtt{q}$ (Figure~\ref{fig:structure-learning}d) first picks a random node in the parse tree of the covariance function, by doing a stochastic walk of the existing parse tree that terminates at the chosen node.
\begin{center}
\begin{tabular}{c}
{\begin{lstlisting}[basicstyle=\small\ttfamily]
prev_cov_function = trace[:cov_function]
path ~ walk_tree(prev_cov_function, ..)
\end{lstlisting}}
\end{tabular}
\end{center}
The code that walks the tree uses the following recursion, which results in a probability distribution that assigns exponentially lower probability to nodes that are deeper in the tree.
\begin{center}
\begin{tabular}{c}
{\begin{lstlisting}[basicstyle=\small\ttfamily]
if ({:done} ~ bernoulli(0.5))
  return path
elseif ({:recurse_left} ~ bernoulli(0.5))
  path = (path..., :left_node)
  return ({:left} ~ walk_tree(node.left, path))
else
  path = (path..., :right_node)
  return ({:right} ~ walk_tree(node.right, path))
end
\end{lstlisting}}
\end{tabular}
\end{center}
The resulting distributions on selected nodes for two possible input trees are shown below:
\begin{center}
\includegraphics[width=0.8\linewidth]{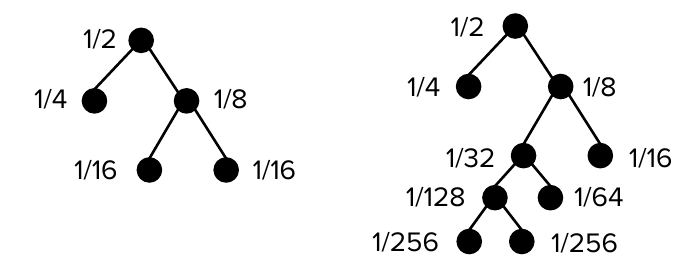} 
\end{center}
The first part of the involution (Figure~\ref{fig:structure-learning}e) copies the random choices made during this walk from the input auxiliary trace to the output auxiliary trace.
\begin{center}
\begin{tabular}{c}
{\begin{lstlisting}[basicstyle=\small\ttfamily]
@copy(aux_in[:path], aux_out[:path])
\end{lstlisting}}
\end{tabular}
\end{center}
Note that here, $\mathtt{@copy}$ is being used to copy the entire \emph{set} of random choices from the namespace $\mathtt{:}\mathtt{path}$ in $\mathtt{aux\_in}$ to the namespace $\mathtt{:}\mathtt{path}$ in $\mathtt{aux\_out}$.

Because the mixture distribution is specified using a probabilistic program, it is straightforward to modify the program $\mathtt{p}$ to define a different mixture distribution.
The code below specifies a mixture distribution that is uniform over all nodes in the tree.
\begin{center}
\begin{tabular}{c}
{\begin{lstlisting}[basicstyle=\small\ttfamily]
n1 = size(node.left); n2 = size(node.right)
if ({:done} ~ bernoulli(1 / (1 + n1 + n2)))
  return path
elseif ({:recurse_left} ~ bernoulli(n1 / (n1+n2))
  path = (path..., :left_node)
  return ({:left} ~ walk_tree(node.left, path))
else
  path = (path..., :right_node)
  return ({:right} ~ walk_tree(node.right, path))
end
\end{lstlisting}}
\end{tabular}
\end{center}
The resulting distributions, for two possible input trees, are:
\begin{center}
\includegraphics[width=0.8\linewidth]{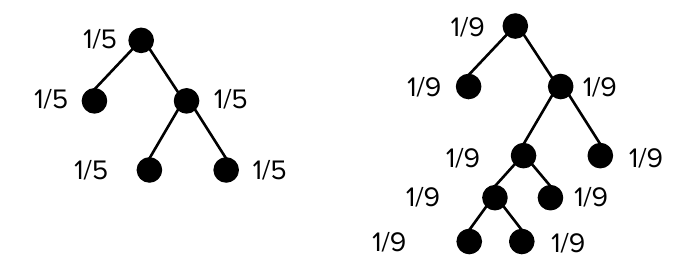}
\end{center}
Note that the probability of choosing a given subtree to propose to is itself changed when the subtree changes.
Therefore, the mixture probabilities do not in general cancel in the the acceptance probability calculation, and must be accounted for.
For the original mixture distribution, the ratio of mixture probabilities is either $1$, $0.5$, or $2$ depending on whether the previous and new subtrees are leaf or internal nodes.
For this alternative mixture distribution, the ratio of mixture probabilities is the ratio of sizes of the two trees (e.g. $9/5$ or $5/9$ for the trees above).
In both cases, our automated involutive MCMC algorithm automatically computes the acceptance probability.

\paragraph{A general pattern for state-dependent mixtures of proposals in Metropolis-Hastings}
The other parts of the auxiliary probabilistic program and the involution program specifies a proposal distribution for the subtree of the parse tree that is rooted at the chosen node.
In particular, the rest of the auxiliary probabilistic program $\mathtt{q}$ proposes a new subtree by sampling from the same process used to recursively define the prior distribution:
\begin{center}
\begin{tabular}{c}
{\begin{lstlisting}[basicstyle=\small\ttfamily]
new_subtree ~ cov_function_prior()
\end{lstlisting}}
\end{tabular}
\end{center}
The involution program swaps the old subtree with the newly proposed subtree:
\begin{center}
\begin{tabular}{c}
\begin{lrbox}{\mybox}%
\begin{lstlisting}[basicstyle=\small\ttfamily]
@copy(model_in[subtree_address], aux_out[:new_subtree])
@copy(aux_in[:new_subtree], model_out[subtree_address])
\end{lstlisting}
\end{lrbox}%
\scalebox{0.9}{\usebox{\mybox}}
\end{tabular}
\end{center}

This is an instance of a more general pattern for implementing state-dependent mixture proposals:
\begin{enumerate}
\item The auxiliary probabilistic program $\mathcal{Q}$ samples from a distribution over different sets of random choices that will be proposed to
(in this case, each set is a different subtree of the parse tree).
\item The auxiliary probabilistic program $\mathcal{Q}$ then samples new values for those random choices (in this case, a new subtree).
\item The involution program $\mathcal{F}$ swaps the previous values of those random choices with their new values, by swapping data between the model trace and the auxiliary trace.
\item The involution program $\mathcal{F}$ copies the random choices that determined what subset of random choices to propose to from the input auxiliary trace to the output auxiliary trace.
\end{enumerate}

\section{DISCUSSION}

Involutive MCMC is a unifying construction~\citep{imcmc} for MCMC algorithms that encompasses both classic approaches to constructing kernels like reversible jump MCMC~\citep{green1995reversible}, but also recently introduced classes of MCMC kernels based on neural networks~\citep{spanbauer2020deep}.
Therefore, the approach to automating involutive MCMC kernels presented in this paper makes a number of classic MCMC techniques easier to use and broadens their accessibility, and may potentially aid in development of novel MCMC techniques as well.
The implementation of our approach in the Gen probabilistic programming system has already been used by researchers in computational biology~\citep{merrell2020inferring} and artificial intelligence~\citep{zhi2020online} to prototype and develop new reversible-jump MCMC algorithms.

The technique for automating involutive MCMC presented in this paper can be generalized to the setting of sequential Monte Carlo samplers~\citep{del2006sequential}.
Instead of one model probabilistic program, one auxiliary probabilistic program and one involution program, there are two model probabilistic programs, two auxiliary probabilistic programs, and a pair of bijective differentiable programs that transform traces of one model into traces of the other.
This more general construct, which builds on earlier work on sequential Monte Carlo and probabilistic programs~\citep{cusumano2018incremental}, has already been implemented as part of the Gen probabilistic programming system.

Improving the performance of automated involutive MCMC and of flexible probabilistic programming systems like Gen more generally is an important area for future work.
The approach described in this paper is largely dynamic and is not performance-competitive with optimized hand-coded implementations in performance-oriented languages like C.
While our approach is already valuable for use cases where the best performance is not necessary or the expertise or time needed for an optimized hand-coded implementation is not available,
more research into compilers and automatic code generation of custom inference algorithms from high-level user specifications
would broaden the applicability of systems like Gen.

\subsubsection*{Acknowledgements}
This research was supported in part by the US Department of Defense through the the National Defense Science \& Engineering Graduate Fellowship (NDSEG) Program, the DARPA SD2 program (contract FA8750-17-C-0239), the DARPA Machine Common Sense (MCS) program, the DARPA Synergistic Discovery and Design (SD2) program, support from the Intel Corporation, and a philanthropic gift from the Aphorism Foundation.
The authors would also like to thank Feras Saad and Cameron Freer for helpful discussions.

\bibliography{references} 

\clearpage
\onecolumn
\appendix
\section{APPENDIX}

%

\subsection{Derivation of the pushforward Radon-Nikodym derivative for a special case} \label{sec:radon-nikodym-special-case}
Implementing Algorithm~\ref{alg:involutive-mcmc} requires computing the Radon-Nikodym derivative $d (\mu \circ f^{-1}) / d \mu$.
This section derives that function for the special case in which the involution $f$ can be factored into an involution on a countable set $I$ and a family of bijections on $\mathbb{R}^{n_i}$ for some $n_i$ for each $i \in I$.
Suppose $Z = \{(i, \mathbf{x}) : i \in I, \mathbf{x} \in \mathbb{R}^{n_i}\}$.
Suppose $f_1$ is an involution on $I$ and $n_i = n_{f_1(i)}$ and $f_2$ is a family of continuously differentiable bijections indexed by $i \in I$, such that $f_{2,i} : \mathbb{R}^{n_i} \to \mathbb{R}^{n_i}$.
Also suppose that $f_{2,i} = f_{2,f_1(i)}^{-1}$ for all $i \in I$. That is,
\begin{equation}
f_{2,i}(f_{2,f_1(i)}(\mathbf{x})) = \mathbf{x} \mbox{ for all } \mathbf{x} \in \mathbb{R}^{n_i} \mbox{ and all } i \in I
\end{equation} 
Then, $f : Z \to Z : (i, \mathbf{x}) \mapsto (f_1(i), f_{2,i}(\mathbf{x}))$ is an involution because:
\begin{equation}
f(f(i, \mathbf{x})) = f(f_1(i), f_{2,i}(\mathbf{x})) = (f_1(f_1(i)), f_{2,f_1(i)}(f_{2,i}(\mathbf{x}))) = (i, \mathbf{x})
\end{equation}
Let $\Sigma_n$ and $\mu_n$ denote the Lebesgue $\sigma$-algebra and Lebesgue measure on $\mathbb{R}^n$, respectively.
Let $\Sigma \subset \mathcal{P}(Z)$ be the $\sigma$-algebra of sets of the form $\cup_{i \in I} \{(i, \mathbf{x}) : \mathbf{x} \in K_i\}$ for some $K_i \in \Sigma_{n_i}$ for each $i \in I$.
Let $\mu$ denote the measure on measurable space $(Z, \Sigma)$ given by:
\begin{equation}
\mu(\cup_{i \in I} \{(i, \mathbf{x}) : \mathbf{x} \in K_i\}) := \sum_{i \in I} \mu_{n_i}(K_i)
\end{equation}
We wish to show that the Radon-Nikodym derivative of the pushforward of $\mu$ by $f$ with respect to $\mu$, evaluated at $(i, \mathbf{x})$, is the absolute value of the Jacobian (determinant) of the function $f_{2,i}$ evaluated at $\mathbf{x}$, which is denoted $(J f_{2,i})(\mathbf{x})$:
\begin{equation}
\frac{d (\mu \circ f^{-1})}{d \mu}(i, \mathbf{x})
= \left| (Jf_{2,i})(\mathbf{x})\right|
\end{equation}
Consider $(\mu \circ f^{-1})(A)$ for $K \in \Sigma$:
\begin{align}
(\mu \circ f^{-1})(\cup_{i \in I} \{ (i, \mathbf{x}) : \mathbf{x} \in K_i \})
&= \mu(f^{-1}(\cup_{i \in I} \{ (i, \mathbf{x}) : \mathbf{x} \in K_i \}))\\
&= \mu(\cup_{i \in I} f^{-1}(\{ (i, \mathbf{x}) : \mathbf{x} \in K_i \}))\\
&= \mu(\cup_{i \in I} \{ (f_1(i), \mathbf{x}) : \mathbf{x} \in f_{2,i}(K_i) \})\\
&= \sum_{i \in I} \mu_{n_i}(f_{2,i}(K_i))
\end{align}
It suffices to show that for all $K \in \Sigma$:
\begin{equation}
\int_K \left| (Jf_{2,i})(\mathbf{x})\right| \mu(dz) = \sum_{i \in I} \mu_{n_i}(f_{2,i}(K_i))
\end{equation}
Expanding the left-hand side:
\begin{align}
\int_K \left| (Jf_{2,i})(\mathbf{x})\right| \mu(dz)
= \sum_{i \in I} \int_{K_i} \left| (Jf_{2,i})(\mathbf{x})\right| \mu_{n_i}(d \mathbf{x}) 
= \sum_{i \in I} \mu_{n_i}(f_{2,i}(K_i)) 
\end{align}
where the final step uses Bogachev Theorem 3.7.1 with $g := 1$, and $F := f_{2,i}$.

\subsection{Proof of detailed balance for involution} \label{sec:involution-detailed-balance}
The involutive MCMC kernel is composed of two parts:
An extension of the state space, and an involution on the extended state space.
First, we show detailed balance for the deterministic involution move applied to the extended state space.

%
%

\citet{tierney1998note} gives a class of MCMC kernels based on involutions that satisfy detailed balance.
We now reproduce the result in our notation:
\begin{lemma}[Detailed balance for involution move~\citep{tierney1998note}] \label{lemma:tierney-involution}
Let $(Z, \Sigma, \pi)$ denote a measure space.
Suppose $f$ is a one-to-one function from $Z$ onto $Z$ such that $f^{-1} = f$.
Consider the probability kernel $k$ defined by $k_z(A) := [f(z) \in A] \alpha(z, f(z)) + [z \in A] (1 - \alpha(z, f(z)))$ (where $\alpha(z, f(z))$ gives the probability of accepting a proposed transition from $z$ to $f(z)$).
Let $\nu(dz) := \pi(dz) + (\pi \circ f^{-1})(dz)$.
Let $h(z)$ be a density for $\pi$ with respect to $\nu$.
Let $K := \{z \in Z : h(z) > 0 \mbox{ and } h(f(z)) > 0\}$.
$k$ satisfies detailed balance with respect to $\pi$ if and only if:
\begin{enumerate}
\item $\alpha(z, f(z)) = 0$ for $\pi$-almost all $z \not \in K$
\item $\alpha(z, f(z)) \frac{h(z)}{h(f(z))} = \alpha(f(z), z)$
\end{enumerate}
\end{lemma}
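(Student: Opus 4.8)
The plan is to verify the detailed-balance identity $\int_A k_z(B)\,\pi(dz) = \int_B k_z(A)\,\pi(dz)$ for all $A, B \in \Sigma$ and to reduce it to the two stated pointwise conditions. First I would substitute the definition of $k_z$, which splits into a \emph{move} contribution $\mathbf{1}[f(z)\in B]\,\alpha(z,f(z))$ and a \emph{stay} contribution $\mathbf{1}[z\in B]\,(1-\alpha(z,f(z)))$. The stay term integrates to $\int_{A\cap B}(1-\alpha(z,f(z)))\,\pi(dz)$, which is manifestly symmetric under swapping $A$ and $B$; hence it cancels from both sides, and detailed balance is equivalent to the symmetry $L(A,B)=L(B,A)$ of the move term $L(A,B) := \int_{A\cap f^{-1}(B)}\alpha(z,f(z))\,\pi(dz)$.

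The key step uses the reference measure $\nu = \pi + \pi\circ f^{-1}$. Because $f$ is an involution, $f_*\nu = f_*\pi + f_*f_*\pi = f_*\pi + \pi = \nu$, so $\nu$ is $f$-invariant and the change-of-variables identity $\int g\,d\nu = \int (g\circ f)\,d\nu$ holds. Writing $\pi(dz)=h(z)\,\nu(dz)$ and applying the substitution $z \mapsto f(z)$ to $L(B,A)$---using $f\circ f = \mathrm{id}$ to simplify $f(z)\in f^{-1}(A) \iff z \in A$---I would rewrite $L(B,A) = \int_{A\cap f^{-1}(B)}\alpha(f(z),z)\,h(f(z))\,\nu(dz)$. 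Comparing with $L(A,B) = \int_{A\cap f^{-1}(B)}\alpha(z,f(z))\,h(z)\,\nu(dz)$ and letting $A,B$ range freely (taking $B=Z$ already exhausts all of $\Sigma$), detailed balance becomes equivalent to the single pointwise identity
\begin{equation*}
\alpha(z,f(z))\,h(z) = \alpha(f(z),z)\,h(f(z)) \qquad \text{for } \nu\text{-almost every } z.
\end{equation*}

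Finally I would decompose $Z$ and match this $\nu$-a.e.\ identity to conditions (1) and (2). On $K$ both densities are positive, so dividing by $h(f(z))$ recovers condition (2). Off $K$, the region on which the identity can fail with positive $\pi$-measure is $\{h(z)>0,\,h(f(z))=0\}$, on which the identity forces $\alpha(z,f(z))=0$, which is condition (1); there $\pi$ and $\nu$ are mutually absolutely continuous, so ``$\pi$-almost all'' and ``$\nu$-almost all'' coincide. The complementary region $\{h(z)=0,\,h(f(z))>0\}$ is exactly the $f$-image of the previous one, so by $f$-invariance of $\nu$ condition (1) transfers there automatically (forcing $\alpha(f(z),z)=0$), while on $\{h(z)=h(f(z))=0\}$ both sides vanish. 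I expect the main obstacle to be this last bookkeeping step: carefully partitioning the complement of $K$ and justifying that a condition stated ``$\pi$-a.e.\ off $K$'' on one piece propagates, via the change of variables under $f$, to the mirror piece required for the full $\nu$-a.e.\ identity---so that conditions (1) and (2) together are not merely sufficient but also necessary for detailed balance.
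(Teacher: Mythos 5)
Your proof is correct, but it cannot be compared step-by-step with the paper's, because the paper does not actually prove this lemma: it reproduces the statement from \citet{tierney1998note} ("we now reproduce the result in our notation"), and the appendix's own contribution in Section~\ref{sec:involution-detailed-balance} is only to \emph{apply} the lemma, verifying that the specific acceptance probability built from Radon--Nikodym derivatives satisfies conditions (1) and (2). What you have supplied is a self-contained proof of the cited result, and it is sound: cancelling the manifestly symmetric ``stay'' term reduces detailed balance to symmetry of the move term $L(A,B)$; the crux is your observation that $\nu = \pi + \pi\circ f^{-1}$ is $f$-invariant because $f$ is an involution, which licenses the change of variables rewriting $L(B,A)$ as $\int_{A\cap f^{-1}(B)}\alpha(f(z),z)\,h(f(z))\,\nu(dz)$; taking $B=Z$ and letting $A$ range over $\Sigma$ correctly upgrades the integral identity to the $\nu$-a.e.\ pointwise identity $\alpha(z,f(z))\,h(z)=\alpha(f(z),z)\,h(f(z))$; and your four-way partition of $Z$ by the positivity of $h$ and $h\circ f$, with condition (1) transferring from $\{h>0,\,h\circ f=0\}$ to its $f$-image by $\nu$-invariance, correctly yields both necessity and sufficiency. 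Two points worth stating explicitly if you write this up: condition (2) must be read as holding $\nu$-a.e.\ on $K$ (the ratio $h(z)/h(f(z))$ is only defined there), which is precisely what your argument produces---the lemma as printed omits the a.e.\ qualifier; and measurability of $z\mapsto\alpha(z,f(z))$ is tacitly assumed so that $k$ is a kernel and your integrands are measurable. In effect you have reconstructed Tierney's argument: the paper's citation buys brevity and keeps the appendix focused on the novel Radon--Nikodym computation, while your proof makes the full stationarity chain self-contained.
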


Now we apply Lemma~\ref{lemma:tierney-involution} to our setting where $\pi$ is $\sigma$-finite, there exists a $\sigma$-finite reference measure $\mu$ for measurable space $(Z, \Sigma)$ such that $\pi$ is mutually absolutely continuous with respect to $\mu$, and where the pushforward of $\mu$ by $f$, denoted $\mu \circ f^{-1}$, is absolutely continuous with respect to $\mu$.

In our setting, $\alpha$ is defined as:
\begin{equation}
\alpha(z, f(z)) := \mbox{min}\left\{1, \frac{\frac{d \pi}{d \mu}(f(z))}{\frac{d \pi}{d \mu}(z)} \cdot \frac{d (\mu \circ f^{-1})}{d \mu}(z)\right\}
\end{equation}
This definition of $\alpha$ satisfies:
\begin{equation}
\alpha(z, f(z)) \frac{\frac{d \pi}{d \mu}(z)}{\frac{d \pi}{d \mu}(f(z))} \cdot \left( \frac{d (\mu \circ f^{-1})}{d \mu} (z)\right)^{-1}= \alpha(f(z), z)
\end{equation}
Therefore, to apply Lemma~\ref{lemma:tierney-involution}, it suffices to show $\pi$ has density with respect to $\nu$ (denoted $h(z)$) such that:
\begin{equation}
\frac{h(z)}{h(f(z))} = \frac{\frac{d \pi}{d \mu}(z)}{\frac{d \pi}{d \mu}(f(z))} \left( \frac{d (\mu \circ f^{-1})}{d \mu} (z)\right)^{-1}
\end{equation}
Since $\pi$ and $\pi \circ f^{-1}$ are both absolutely continuous with respect to $\mu$, $\nu$ is also absolutely continuous with respect to $\mu$, and has density:
\begin{equation}
\frac{d \nu}{d \mu}(z) = \frac{d \pi}{d \mu}(z) + \frac{d (\pi \circ f^{-1})}{d \mu}(z) 
\end{equation}
$\pi$ is absolutely continuous with respect to $\nu$, and therefore:
\begin{equation}
\frac{d \pi}{d \mu}(z) = \frac{d \pi}{d \nu}(z) \cdot \frac{d \nu}{d \mu}(z)
\end{equation}
Because $(d \pi / d \mu)(z) > 0$ for all $z \in Z$, $(d \nu / d \mu)(z) > 0$ for all $z \in Z$.
Therefore,
\begin{equation}
h(z)
:= \frac{d \pi}{d \nu}(z)
= \frac{\frac{d \pi}{d \mu}(z)}{\frac{d \nu}{d \mu}(z)}
\end{equation}
Therefore:
\begin{equation}
\frac{h(z)}{h(f(z))} = \frac{\frac{d \pi}{d \mu}(z)}{\frac{d \pi}{d \mu}(f(z))} \cdot \frac{\frac{d \nu}{d \mu}(f(z))}{\frac{d \nu}{d \mu}(z)}
\end{equation}
It suffices to show that:
\[
\frac{\frac{d \nu}{d \mu}(f(z))}{\frac{d \nu}{d \mu}(z)} = \left(\frac{d (\mu \circ f^{-1})}{d \mu}(z)\right)^{-1}
\]

First, we prove a Lemma:
\begin{lemma} \label{lemma:reference-measure-pushforward}
If $(Z, \Sigma)$ is a measurable space and $f : Z \to Z$ is a measurable function that is an involution, $\nu$ and $\mu$ are $\sigma$-finite measures such that $\nu$ is absolutely continuous with respect to $\mu$, and such that the pushforward measures $\nu \circ f^{-1}$ and $\mu \circ f^{-1}$ are both $\sigma$-finite, then
$\nu \circ f^{-1}$ is absolutely continuous with respect to $\mu \circ f^{-1}$ and
\begin{equation}
\frac{d (\nu \circ f^{-1})}{d (\mu \circ f^{-1})}(z) = \frac{d \nu}{d \mu}(f(z))
\end{equation}
\end{lemma}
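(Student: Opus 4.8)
The plan is to split the statement into its two assertions—absolute continuity, and the formula for the derivative—and dispatch the first immediately before verifying the second by the defining property of a Radon--Nikodym derivative. First I would handle absolute continuity: if $A \in \Sigma$ satisfies $(\mu \circ f^{-1})(A) = \mu(f^{-1}(A)) = 0$, then since $\nu \ll \mu$ we get $\nu(f^{-1}(A)) = 0$, i.e. $(\nu \circ f^{-1})(A) = 0$. Hence $\nu \circ f^{-1} \ll \mu \circ f^{-1}$, and together with $\sigma$-finiteness of $\mu \circ f^{-1}$ this guarantees the derivative $d(\nu \circ f^{-1})/d(\mu \circ f^{-1})$ exists and is $(\mu \circ f^{-1})$-a.e. unique.

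Next I would propose the explicit candidate $\varphi := d\nu/d\mu$ and claim that $\varphi \circ f$ is a version of $d(\nu \circ f^{-1})/d(\mu \circ f^{-1})$. To confirm this it suffices to check the defining integral identity: for every $A \in \Sigma$,
\begin{equation}
(\nu \circ f^{-1})(A) = \int_A (\varphi \circ f)(z)\, (\mu \circ f^{-1})(dz).
\end{equation}
The key tool is the abstract change-of-variables formula for pushforward measures, namely $\int_A g\, d(\mu \circ f^{-1}) = \int_{f^{-1}(A)} (g \circ f)\, d\mu$ for any nonnegative measurable $g$. Applying this with $g = \varphi \circ f$ turns the right-hand side into $\int_{f^{-1}(A)} (\varphi \circ f \circ f)\, d\mu$.

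Here the involution property does the essential work: since $f(f(z)) = z$, we have $\varphi \circ f \circ f = \varphi$, so the right-hand side collapses to $\int_{f^{-1}(A)} \varphi\, d\mu = \nu(f^{-1}(A)) = (\nu \circ f^{-1})(A)$, which is exactly the left-hand side. The main obstacle is less a conceptual difficulty than a matter of bookkeeping: stating and invoking the pushforward change-of-variables formula cleanly, confirming that $\varphi \circ f$ is genuinely $\Sigma$-measurable (a composition of measurable maps), and being careful that the $\sigma$-finiteness hypotheses on $\nu \circ f^{-1}$ and $\mu \circ f^{-1}$ are what license both the existence of the derivative and its a.e. uniqueness, so that identifying $\varphi \circ f$ with it is justified.
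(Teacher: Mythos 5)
Your proposal is correct and follows essentially the same route as the paper's proof: the identical one-line absolute continuity argument, followed by verifying that $z \mapsto \frac{d\nu}{d\mu}(f(z))$ satisfies the defining integral identity via the pushforward change-of-variables formula (the paper cites Bogachev Theorem 3.6.1 for this), with the involution property collapsing $\frac{d\nu}{d\mu} \circ f \circ f$ to $\frac{d\nu}{d\mu}$.
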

\begin{proof}
First, $\nu \circ f^{-1}$ is absolutely continuous with respect to $\mu \circ f^{-1}$ because
$(\mu \circ f^{-1})(A) = 0$ implies $\mu(f^{-1}(A)) = 0$ implies $\nu(f^{-1}(A)) = 0$ implies $(\nu \circ f^{-1})(A) = 0$.
To show that $z \mapsto (d \nu / d \mu)(f(z))$ is the Radon-Nikodym derivative $d (\nu \circ f^{-1}) / d (\mu \circ f^{-1})$, it suffices to show that for all $K \in \Sigma$:
\begin{equation}
\int_K \left( \frac{d \nu}{d \mu} (f(z)) \right) ( \mu \circ f^{-1} )(dz) = (\nu \circ f^{-1})(A) := \nu(f^{-1}(A))
\end{equation}
Applying Theorem 3.6.1 in Bogachev with $Y := K$, $y := z$, $x := z'$, $X := f^{-1}(A)$, and $g(y) := (d\nu / d\mu)(f(y))$:
\begin{align}
\int_K \left( \frac{d \nu}{d \mu} (f(z)) \right) ( \mu \circ f^{-1} )(dz)
&= \int_Y g(y) ( \mu \circ f^{-1} )(dy)\\
&= \int_X g(f(x)) \mu(dx) \;\;\;\; [\mbox{Bogachev Theorem 3.6.1}]\\
&= \int_{f^{-1}(A)} (d\nu / d\mu)(f(f(z'))) \mu(dz')\\
&= \int_{f^{-1}(A)} (d\nu / d\mu)(z') \mu(dz')\\
&= \int_{f^{-1}(A)} \nu(dz')
= \nu(f^{-1}(A))
\end{align}

\end{proof}

Now, note that $\nu$ and $\nu \circ f^{-1}$ are the same measure:
\begin{align}
\nu(A) &= \pi(A) + \pi'(A) = \pi(A) + \pi(f^{-1}(A))\\
(\nu \circ f^{-1})(A) &= \pi(f^{-1}(A)) + \pi(f^{-1}(f^{-1}(A))) = \pi(f^{-1}(A)) + \pi(A)
\end{align}
Therefore,
\[
\frac{d (\nu \circ f^{-1})}{d \nu}(z) = 1 \mbox{ for all } z
\]
Expanding $d(\nu \circ f^{-1}) / d \nu$ using the chain rule:
\begin{align}
1 &= \frac{d (\nu \circ f^{-1})}{d \nu}(z)\\
&=   \frac{d (\nu \circ f^{-1})}{d (\mu \circ f^{-1})}(z) \cdot
    \frac{d (\mu \circ f^{-1})}{d \mu}(z) \cdot
    \frac{d \mu}{d \nu}(z) \label{eq:mu-is-absolutely-continuous-with-respect-to-nu}\\
&= \frac{d \nu}{d \mu}(f(z)) \cdot \frac{d (\mu \circ f^{-1})}{d \mu}(z) \cdot \frac{d \mu}{d \mu}(z) \;\;\;\; [\mbox{Lemma~\ref{lemma:reference-measure-pushforward}}]\\
&= \frac{d \nu}{d \mu}(f(z)) \cdot \frac{d (\mu \circ f^{-1})}{d \mu}(z) \cdot \frac{1}{\frac{d \nu}{d \mu}(z)}\\
\left(\frac{d (\mu \circ f^{-1})}{d \mu}(z)\right)^{-1} &= \frac{\frac{d \nu}{d \mu}(f(z))}{\frac{d \nu}{d \mu}(z)}
\end{align}

\subsection{Proof of stationarity for involution} \label{sec:involution-is-stationary}
Detailed balance of the the involution kernel with respect to the measure induced by $\pi$ implies:
\begin{equation}
\int_B \pi(z) k'_z(A) \mu(dz) = \int_K \pi(z) k'_z(B) \mu(dz) \;\; \mbox{ for all } K, B \in \Sigma
\end{equation}
Stationarity with respect to $\pi$ follows by substituting $Z$ for $B$:
\begin{align}
\int_Z \pi(z) k'_z(A) \mu(dz) &= \int_K \pi(z) k'_z(Z) \mu(dz) = \int_K \pi(z) \mu(dz) \;\; \mbox{ for all } A \in \Sigma\\
\end{align}

\subsection{Proof of stationarity for end-to-end kernel} \label{sec:involutive-mcmc-is-stationary}
We are given that the involution is stationary with respect to (the measure induced by) $\pi(x, u) := p(x) q_x(u)$:
\begin{equation}
\int_Z k'_z(A) \pi(z) \mu(dz) = \int_K \pi(z) \mu(dz) \mbox{ for all } A \in \Sigma
\end{equation}
The end-to-end kernel is defined fir all $x \in X_P$ such that $p(x) > 0$ as:
\begin{equation}
k_x(A) := \int_U k'_{x,u}(A \times U) q_x(u) \mu_U(du) \;\; \mbox{for all} \;\; A \in \Sigma_P, x \in X
\end{equation}
Stationarity of the end-to-end kernel with respect to the measure induced by $p$ is:
\begin{equation}
\int_X k_x(A) p(x) \mu_P(dx) = \int_K p(x) \mu_P(dx) \mbox{ for all } A \in \Sigma_P
\end{equation}
Expanding:
\begin{align*}
    \int_X k_x(A) p(x) \mu_P(dx) &= \int_X \left( \int_U k'_{x,u}(A \times U) q_x(u) \mu_Q(du) \right) p(x) \mu_P(dx)\\
    &= \int_{X \times U} k'_{x,u}(A \times U) q_x(u) p(x) (\mu_P \times \mu_Q)(dz)\\
    &= \int_{Z} k'_{x,u}(\{(x', u') \in Z : x' \in A\}) \pi(z) \mu(dz)\\ 
    &= \int_{\{(x, u) \in Z : x \in A\}} \pi(z) \mu(dz)\;\;\; [\mbox{ Stationarity of $k'$ with respect to $\pi$}]\\
    &= \int_{\{(x, u) \in Z : x \in A\}} q_x(u) p(x) \mu(dz)\\
    &= \int_K \left( \int_U q_x(u) \mu_Q(du) \right) p(x) \mu_P(dx)\\
    &= \int_K p(x) \mu_P(dx)
\end{align*}

\subsection{A Sufficient Condition for Involutive MCMC with Dictionaries} \label{sec:technical-requirement}

Our formulation of involutive MCMC requires the following technical condition to hold:
$Z := \{(x, y) \in X \times Y : \pi(x, y) > 0\}$ is $\mu_P \times \mu_Q$-measurable.
We now give a sufficient condition for this to hold, when $X$ and $Y$ are spaces of dictionaries.
Let $D \subseteq \mathcal{K}$ denote the subset of addresses that are discrete (i.e. where $V_k$ is a countable set and $\mu_k$ is the counting measure).
For $\mathbf{x} \in \times_{k \in A} V_k$ let $\mathbf{x} = (\mathbf{d}, \mathbf{c})$ where $\mathbf{d} \in \times_{k \in A \cap D} V_k$ and $\mathbf{c} = \times_{k \in A \setminus D} V_k$, so that $\mathbf{d}$ is the discrete part of $\mathbf{x}$ and $\mathbf{c}$ is the non-discrete part.
\begin{lemma}
Suppose that $p$ and $q$ are such that $p(K, (\mathbf{d}, \mathbf{c})) > 0$ implies
$p(K, (\mathbf{d}, \mathbf{c}')) > 0$ for all $\mathbf{c}' \in \prod_{k \in A \setminus D} V_k$,
and that $q_x(K, (\mathbf{d}, \mathbf{c})) > 0$ where $x = (\tilde{A}, \tilde{\mathbf{d}}, \tilde{\mathbf{c}})$ implies that
$q_{x'}(K, (\mathbf{d}, \mathbf{c}')) > 0$ for all $\mathbf{c}' \in \prod_{k \in A \setminus D} V_k$ and all $x' = (\tilde{A}, \tilde{\mathbf{d}}, \tilde{\mathbf{c}}')$ where $\tilde{\mathbf{c}}' \in \prod_{k \in \tilde{A} \setminus D} V_k$.
Then, $Z := \{(x, y) \in X \times Y : \pi(x, y) > 0\}$ is $\mu \times \mu$-measurable where $\mu$ is the reference measure on traces.
\end{lemma}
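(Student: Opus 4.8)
The plan is to show that the two hypotheses force the indicator $\mathbf{1}[\pi(x,y) > 0]$ to depend only on the \emph{discrete data} of $x$ and $y$ (their key sets together with the values at discrete keys), and then to exhibit $Z$ as a countable union of measurable rectangles indexed by these discrete configurations. For a dictionary $x = (A, (\mathbf{d}, \mathbf{c}))$, write $\delta(x) := (A, \mathbf{d})$ for its \emph{discrete part}. The hypothesis on $p$ says precisely that positivity of $p(x)$ depends only on $\delta(x)$: if $p$ is positive at one continuous fiber over $\delta(x)$, it is positive over the whole fiber. The hypothesis on $q$ says that positivity of $q_x(y)$ depends only on $\delta(x)$ and $\delta(y)$. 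Since $\pi(x,y) > 0$ iff $p(x) > 0$ and $q_x(y) > 0$, it follows that $\mathbf{1}[\pi(x,y) > 0]$ is a function of the pair $(\delta(x), \delta(y))$ alone.

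Next I would observe that the set of possible discrete parts is countable. The key set $A$ ranges over the finite subsets of the countable set $\mathcal{K}$, of which there are countably many, and for each such $A$ the discrete assignment $\mathbf{d} \in \prod_{k \in A \cap D} V_k$ ranges over a finite product of countable sets, hence a countable set. Therefore the collection of pairs $(\delta_x, \delta_y)$ is countable; let $S$ denote the (countable) subset of those pairs over which $\pi$ is positive.

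Then I would write $Z = \bigcup_{(\delta_x, \delta_y) \in S} \bigl( C_{\delta_x} \times C_{\delta_y} \bigr)$, where $C_{(A, \mathbf{d})} := \{(A, (\mathbf{d}, \mathbf{c})) : \mathbf{c} \in \prod_{k \in A \setminus D} V_k\}$ is the full continuous fiber over a fixed discrete part; the inclusion of each whole rectangle is exactly the content of the first step. Each $C_{(A, \mathbf{d})}$ lies in $\Sigma_{\mathcal{D}}$: by the definition of $\Sigma_{\mathcal{D}}$ it suffices to check that the slice at every finite key set is measurable, and indeed the slice at key set $A$ is $\{\mathbf{d}\} \times \prod_{k \in A \setminus D} V_k$, a product of a singleton (measurable under the counting-measure $\sigma$-algebra on the discrete coordinates) with the full space (measurable on the continuous coordinates), while the slice at every other key set is empty. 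Hence each $C_{\delta_x} \times C_{\delta_y}$ is a measurable rectangle in the product $\sigma$-algebra, and $Z$, as a countable union of such rectangles, is $\mu \times \mu$-measurable (indeed it sits in the product $\sigma$-algebra itself, with no completion needed).

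The main obstacle---and really the only nonroutine point---is the first step: unpacking the two hypotheses to conclude that support-positivity factors through the discrete projections $\delta(x)$ and $\delta(y)$. Once that factorization is in hand, the remaining work is bookkeeping: checking the cylinder sets against the explicit definition of $\Sigma_{\mathcal{D}}$ and invoking the countability of the discrete index set. I expect both of these to be routine.
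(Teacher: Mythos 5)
Your proof is correct and takes essentially the same approach as the paper: the paper's (very terse) proof asserts exactly the decomposition you derive, namely that $Z$ is a countable union, over key sets and discrete values $(K_1, K_2, \mathbf{d}_1, \mathbf{d}_2)$, of full continuous fibers, and concludes measurability from there. Your write-up simply fills in the details the paper leaves implicit (the factorization of support-positivity through the discrete parts, and the membership of each fiber in $\Sigma_{\mathcal{D}}$).
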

\begin{proof}
For $p$ and $q$ satisfying these conditions,
$Z = \cup_{(K_1,K_2,\mathbf{d}_1,\mathbf{d}_2) \in E} \{((K_1, (\mathbf{d}_1, \mathbf{c}_1)), (K_2, (\mathbf{d}_2, \mathbf{c}_2))) : \mathbf{c}_1 \in \times_{k \in K_1 \setminus D} V_k, \mathbf{c}_2 \in \times_{k \in K_2 \setminus D} V_k\}$ for some countable set $E \subseteq \{(K_1, K_2, \mathbf{d}_1, \mathbf{d}_2) : K_1, K_2 \subseteq \mathcal{K}, |K_1| < \infty, |K_2| < \infty, \mathbf{d}_1 \in \times_{k \in K_1 \cap D} V_k, \mathbf{d}_2 \in \times_{k \in K_2 \cap D} V_k\}$ of address sets and discrete choice values for both programs.
The measure of $Z$ is
$(\mu \times \mu)(Z) = \sum_{(K_1,K_2,\mathbf{d}_1,\mathbf{d}_2) \in E} \prod_{k \in (K_1 \setminus D) \cup (K_2 \setminus D)} \mu_k(V_k)$.
\end{proof}
When $p$ and $q$ are defined via probabilistic programs $\mathcal{P}$ and $\mathcal{Q}$ respectively, 
this requirement means that for both the model probabilistic program $\mathcal{P}$ and the auxiliary probabilistic program $\mathcal{Q}$, the support of a random choice that is not discrete cannot depend on the value of another non-discrete random choice.
Additionally, the support of non-discrete random choices in $\mathcal{Q}$ cannot depend on the value of non-discrete random choices in the input $x$, which is a trace of $\mathcal{P}$.
This requirement defines a notion \emph{well-behavedness} for a probabilistic program ($\mathcal{P}$) and an additional notion of well-behavedness for a pair of probabilistic programs that are sequenced one after the other ($\mathcal{P}$ and $\mathcal{Q}$).

\end{document}